\DeclareFontFamily{U}{matha}{\hyphenchar\font45}
\DeclareFontShape{U}{matha}{m}{n}{
      <5> <6> <7> <8> <9> <10> gen * matha
      <10.95> matha10 <12> <14.4> <17.28> <20.74> <24.88> matha12
      }{}
\DeclareSymbolFont{matha}{U}{matha}{m}{n}
\DeclareMathSymbol{\odiv}         {2}{matha}{"63}
\def\mb{\mathbf}
\newcommand{\Rmnum}[1]{\expandafter\@slowromancap\romannumeral #1@}
\mathchardef\mhyphen="2D
\crefname{hypothesis}{Hypothesis}{Hypotheses}
\title{Consensus-based adaptive sampling and approximation for high-dimensional energy landscapes \thanks{Submitted to the editors DATE.
\funding{The work is supported in part by the
National Science Foundation under Grant DMS-2110981 and DMS-2143739, and the
ACCESS program through allocation MTH210005.}}}
\author{Liyao Lyu\thanks{Department of Computational Mathematics, Science \& Engineering}, {Michigan State University}, {{428 S Shaw Ln}, {East Lansing}, {48824}, {MI}, {USA}
  (\email{lyuliyao@msu.edu}).}
\and Huan Lei\thanks{Department of Computational Mathematics, Science \& Engineering}, {Michigan State University}, {{428 S Shaw Ln}, 
Department of Statistics \& Probability}, Michigan State University, Wells Hall, 619 Red Cedar Road, {East Lansing}, {48824}, {MI}, {USA}.
  (\email{leihuan@msu.edu}).}
\begin{document}

\maketitle

\begin{abstract}
We present a consensus-based framework that unifies phase space exploration with posterior-residual-based adaptive sampling for surrogate construction in high-dimensional energy landscapes.
Unlike standard approximation tasks where sampling points can be freely queried, systems with complex energy landscapes such as molecular dynamics (MD) do not have direct access to arbitrary sampling regions due to the physical constraints and energy barriers; the surrogate construction further relies on the dynamical exploration of phase space, posing a significant numerical challenge. We formulate the problem as a minimax optimization that jointly adapts both the surrogate approximation and residual-enhanced sampling. The construction of free energy surfaces (FESs) for high-dimensional collective variables (CVs) of MD systems is used as a motivating example to illustrate the essential idea. Specifically, the maximization step establishes a stochastic interacting particle system to impose adaptive sampling through both exploitation of a Laplace approximation of the max-residual region and exploration of uncharted phase space via temperature control. The minimization step updates the FES surrogate with the new sample set. Numerical results demonstrate the effectiveness of the present approach for biomolecular systems with up to 30 CVs. While we focus on the FES construction, the developed framework is general for efficient surrogate construction for complex systems with high-dimensional energy landscapes.

\end{abstract}

\begin{keywords}
Consensus-based sampling, Adaptive sampling, Minimax optimization, Phase space exploration, High-dimensional free energy
\end{keywords}

\begin{MSCcodes}
65C35, 65K10, 82C31
\end{MSCcodes}

\section{Introduction}

Many problems in computational science rely on the efficient approximation of physical quantities in systems governed by high-dimensional energy landscapes, where direct sampling and evaluation are constrained by complex system dynamics. Canonical examples include the computation of transition paths in gradient systems \cite{E_Ren_PRB_2002, e2010transition}, the construction of coarse-grained molecular dynamics (MD) models \cite{noid2013perspective}, and uncertainty quantification under energy-induced probability measures \cite{lelievre2016partial}. Unlike the standard surrogate construction tasks where the sampling points can be freely queried, these problems further rely on efficient exploration and sampling over the phase space, where the thermodynamically accessible regions are often unknown \emph{a priori}. There are two essential challenges: (\Rmnum{1}) the prevalence of energy barriers, which makes direct sampling inefficient and prone to getting trapped in local minima; various enhanced sampling strategies are often required; (\Rmnum{2}) the high dimensionality of the surrogate model, which generally requires a large number of samples and motivates adaptive sampling strategies based on the approximation error of the target quantity. In practice, the efficient surrogate construction should account for both enhanced sampling with the complex energy landscapes and the residual-based adaptivity in constrained phase space --- yet simultaneously addressing both remains a nontrivial and open computational challenge.

In this study, we aim to develop a unified approach that enables both efficient phase space exploration and residual-based adaptive sampling for constrained phase space learning problems. As a motivating application, we consider the construction of free energy surfaces (FESs) with respect to a set of collective variables (CVs) for MD systems. While this is a long-standing problem in computational science, the accurate construction of high-dimensional FESs remains a difficult task due to the aforementioned two challenges. Most existing methods primarily target the first challenge based on various importance sampling strategies to overcome energy barriers, such as umbrella sampling~\cite{torrie1977nonphysical}, histogram reweighting~\cite{Kumar_Kollman_JCC_1992}, metadynamics~\cite{laio2002escaping,grafke2024metadynamics}, variationally enhanced sampling~\cite{valsson2014variational,shaffer2016enhanced,bonati2019neural}, and adaptive biasing force ~\cite{Darve_Pohorille_JCP_2001,darve2008adaptive,lelievre2008long,chipot2011enhanced}. Alternatively, temperature-accelerated  \cite{maragliano2006temperature}
and adiabatic dynamics \cite{rosso2002use,abrams2008efficient} introduce an extended dynamics of the CVs with an artificially high temperature to facilitate the phase space exploration. Despite their broad applications, the computational efficiency of these methods generally degrades as the number of CVs increases. More importantly, these methods do not explicitly address the second challenge. The residual error is not incorporated into the adaptive sampling and FES construction, which limits their effectiveness in high-dimensional problems. 

From a different perspective, several approaches related to the second challenge have been developed based on adaptive sampling \cite{tang2022adaptive, yu2022gradient, tang2023adversarial, Gao_Yan_SIAM_2023, Gao_Wang_JCP_2023, Jiao_Li_arxiv_2023, Han_Zhou_Stringnet_arxiv_2024} and adversarial learning \cite{zang2020weak, bao2020numerical, ZengCPINN_arxiv_2022} for solving high-dimensional partial differential equations (PDEs). The essential idea is to introduce certain residual-based distributions or weak formulations, where new collocation points or test basis functions can be adaptively updated during the training process. While they have shown promising results for high-dimensional PDEs, these methods rely on the free query of new sample points and their residual error within the domain. As such, they cannot be directly applied to the present problem, where the global residual error is unknown \emph{a priori}. In particular, the new sample points cannot be freely placed within the phase space but need to be navigated through dynamical exploration of the thermodynamically accessible regions. Alternatively, the reinforced dynamics (RiD) \cite{zhang2018reinforced} (see also Ref.~\cite{van2023hyperactive}) proposes using the uncertainty indicator as a proxy for the residual error to bias MD simulations, which, however, relies on calculating the standard deviation of the predictions from an ensemble of neural network (NN) surrogates trained on the same dataset. Moreover, the phase space exploration is constrained by the underlying energy landscape, which typically requires small time steps due to the stiffness and roughness of the MD potential function.

To address the above two challenges, we present a consensus-based adaptive sampling (CAS) method to efficiently construct surrogates within high-dimensional energy landscapes with applications to FES construction in complex MD models. A unique feature is that the present method enables gradient-free residual-based adaptive sampling such that the FES approximation and the phase-space exploration can be simultaneously optimized.  
The method is formulated as a minimax optimization problem. The max-problem seeks a residual-based distribution to establish adaptive sampling in the vicinity of the explored phase space regime, while the min-problem optimizes the FES surrogate based on the new samples.
For the maximization step, we emphasize that the establishment of the residual-based distribution is only formal; the analytical form of the distribution is unknown and the value at an arbitrary point can not be directly obtained. As a result, common sampling approaches based on Markov Chain Monte Carlo (MCMC) \cite{robert2004monte} and Langevin dynamics \cite{Gareth_Richard_Bernoulli_1996} are not applicable, as explained in more detail in Sec. \ref{sec:min_max}. Instead, we establish a consensus-based sampling \cite{carrillo2022consensus} (see also Refs. 
\cite{carrillo2018analytical, jingrun2022Consensus}) in the form of a stochastic interacting particle system governed by a McKean stochastic differential equation (SDE).
The gradient-free nature enables us to collect new samples adaptive to the local residual without the analytical form of the target distribution. 
Specifically, a quadratic potential is adaptively constructed to probe the local max-residual regime by exploiting the Laplace approximation under a low-temperature limit. Meanwhile, a coherent noise term is introduced to efficiently explore the full CV space under a high-temperature limit and yield the updated sampling points used for the subsequent minimization step.  

The present iterative procedure achieves adversarial learning of the FES pertaining to the thermodynamically important regions. In contrast to existing adaptive sampling methods for PDE solvers \cite{tang2022adaptive, tang2023adversarial, Gao_Yan_SIAM_2023}, the present method does not rely on the free query of arbitrary sample points. Instead, it enables us to establish a dynamical exploration of the phase space along with the surrogate construction. Moreover, unlike the reinforced dynamics~\cite{zhang2018reinforced} that navigates the sampling points through the biased MD simulations, the present sampling dynamics is governed by a smooth quadratic potential irrespective of the roughness of the underlying energy landscape, enabling much larger time steps and more efficient phase-space exploration. We demonstrate the effectiveness of the proposed method by constructing the FES of biomolecule systems involving up to $30$ CVs. Fig. \ref{fig:flow_chart} sketches the workflow of the proposed method.

\begin{figure}
\centering
\includegraphics[width=0.8\textwidth]{./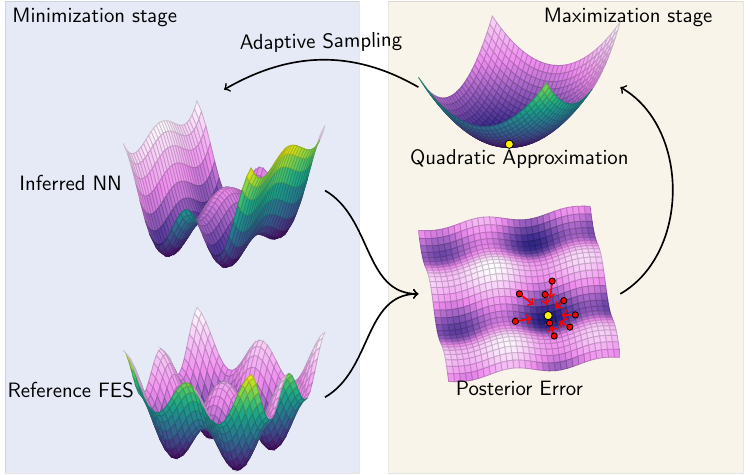}
\caption{The workflow of the present CAS-based method. In the minimization step, given a collection of sampling points, the reference force $F(\mb z)$ (i.e., the gradient of the underlying FES) can be calculated using the restrained molecular dynamics; a comparison with the force inferred from the NN surrogate $A_{\mathcal{N}}(\mb z)$ yields the loss function $L_{\mathcal{N}}(\mb z) = \vert \nabla_{\mb z} A_{\mathcal{N}}(\mb z) + F(\mb z)\vert^2 $. In the maximization step, the loss function determines a residual-based distribution with entropy regularization $q^{\ast}(\mb z) \propto \exp(\kappa_h\mathcal{L_N}(\mb z))$. An interacting particle system following the McKean SDE is used to achieve adaptive sampling of the max-residual region by modulating the exploitation of the Laplace approximation of $q^{\ast}(\mb z)$ and the exploration of the uncharted phase space. 
The FES can be accurately reconstructed after several iterations of the minimization and maximization step.}
\label{fig:flow_chart}
\end{figure}


\section{Methods}
\label{sec:model}
\subsection{Free energy and mean forces}
We consider a full model with micro-scale coordinates $\mathbf r \in \mathbb{R}^N$ whose dynamics is governed by potential $U(\mathbf{r}): \mathbb{R}^N \rightarrow \mathbb{R}$ under temperature $T$. Suppose we are interested in CVs $\mathbf{s}(\mathbf{r}): \mathbb{R}^N \to \Gamma$ with $\Gamma \subset  \mathbb{R}^M$, the FES $A(\mathbf{z})$ of the CVs is defined by 
\begin{equation}
    \begin{aligned}
        A(\mathbf{z}) = -\frac{1}{\beta}\ln \rho(\mathbf{z}),
    \end{aligned}
\end{equation}
where $\beta = 1/k_B T$ is the inverse of the thermal temperature, 
\begin{equation}
    \begin{aligned}
    \rho(\mathbf{z}) = \frac{1}{Z}\int \exp{(-\beta U(\mathbf{r}))}\delta(\mathbf{s}(\mathbf{r})-\mathbf{z}) \mathrm{d} \mathbf{r}
    \end{aligned}
\end{equation}
is the marginal probability density function (PDF) for $\mathbf{s}(\mathbf{r}) = \mathbf z$, $\delta (\cdot)$ represents the Dirac delta function and $Z = \int \exp{(-\beta U(\mathbf{r}))} \mathrm{d} \mathbf{r}$ is the partition function; we refer to Ref. \cite{stoltz2010free} for details. For high-dimensional CVs, direct estimation of $\rho(\mathbf z)$ often becomes numerically challenging. An alternative approach is to fit the mean force $\mathbf F(\mathbf z):= -\nabla A(\mathbf z)$ at various sample points, which can be estimated via the restrained MD \cite{allen2017computer} by introducing a harmonic term into the full potential, i.e.,
\begin{equation}
U_k(\mathbf{r},\mathbf{z}) = U(\mathbf{r}) +\frac{k}{2}(\mathbf{s}(\mathbf{r})-\mathbf{z})^\top(\mathbf{s}(\mathbf{r})-\mathbf{z}),
\label{eq:U_bias}
\end{equation}
where $k$ represents the magnitude of the restrained potential. 
As shown in Ref~\cite{maragliano2006temperature}, the mean force can be computed by $\mathrm{F}(\mathbf{z})=\lim_{k\to \infty}\mathrm{F}^k(\mathbf{z})$, where $\mathrm{F}^k(\mathbf{z})$ is defined by
\begin{equation}
\mathrm{F}^k(\mathbf{z})=\frac{1}{Z_k(\mathbf{z})}\int k(\mathbf{s}(\mathbf{r})-\mathbf{z})\exp{(-\beta U_k(\mathbf{r},\mathbf{z}))} \mathrm{d} \mathbf{r},
\end{equation}
and can be sampled as the first-moment estimation. 

In principle, given a collection of sample points, $A(\mathbf z)$ can be re-constructed (up to a constant) by matching the mean force $-\nabla A(\mathbf z)$ at the individual points. However, for MD systems with complex energy landscape, the sampling over the phase space could be highly non-trivial due to the prevalence of energy local minima; the training set is often determined \emph{a prior} as pre-selected points or in a greedy manner. 
As the number of CVs increases, the empirical random samples often introduce pronounced discretization error. To efficiently construct $A(\mathbf z)$ in the thermodynamically accessible region, it is desirable to simultaneously optimize the training set and the FES surrogate approximation through certain adaptive sampling based on the \emph{posterior} residual error. This motivates the present method illustrated below.


\subsection{Min-Max formulation}\label{sec:min_max}
Let $A_{\mathcal N}(\mathbf{z})$ denote the NN surrogate of the FES $A(\mathbf z)$, which is parameterized by minimizing the loss function 
{\color{black}
\begin{equation}
\label{equ:pde}
\mathcal{L}_\mathcal{N}(\mathbf{z}) = \left \vert \nabla_\mathbf{z} A_{\mathcal N}(\mathbf{z}) + \mathrm{F}(\mathbf{z})\right \vert ^2
\end{equation}
}
for $\mathbf z \in \Gamma$. 
To solve the minimization problem, we introduce a sampling distribution $q(\mathbf z)$
and define the weighted loss 
\begin{equation}
\label{equ:weak}
(\mathcal{L}_\mathcal{N},q) = \int_\Gamma \mathcal{L}_\mathcal{N}(\mathbf{z}) q(\mathbf{z}) \mathrm{d}\mathbf{z}.
\end{equation}
A desired distribution intends to maximize the discrepancy in the dataset for the current NN surrogate $A_{\mathcal N}(\mathbf{z})$. Accordingly, we define the maximization problem as
\begin{equation}
J[\mathcal{L}_\mathcal{N}]=\max_q (\mathcal{L}_\mathcal{N},q).
\end{equation}
Since $(\mathcal{L}_\mathcal{N},q)$ is always non-negative, a good approximation of the original free energy surface $A_{\mathcal N}(\mathbf{z})$ (up to a constant) can be obtained by solving the following problems
\begin{equation}
\label{equ:min-max}
\min_{A_\mathcal{N}}\max_q (\mathcal{L}_\mathcal{N},q).
\end{equation}

\begin{proposition}
\label{thm:eq1}
Assuming that there exists a solution of $\mathcal{L}_\mathcal{N}(\mathbf{z}) = 0$ for $\mathbf z \in \Gamma$, then $A_{\mathcal{N}}^{\ast}$ is a solution if and only if it solves \eqref{equ:min-max}.
\end{proposition}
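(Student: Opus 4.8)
The plan is to reduce the saddle-point condition in \eqref{equ:min-max} to a pointwise statement about the residual $\mathcal{L}_\mathcal{N}$ by first resolving the inner maximization in closed form. First I would fix the admissible class for $q$ as probability densities on $\Gamma$ (nonnegative, unit mass). Since $\mathcal{L}_\mathcal{N}(\mathbf{z}) \geq 0$ pointwise, every such $q$ satisfies $(\mathcal{L}_\mathcal{N}, q) = \int_\Gamma \mathcal{L}_\mathcal{N}(\mathbf{z}) q(\mathbf{z})\, \mathrm{d}\mathbf{z} \leq \sup_{\mathbf{z}\in\Gamma}\mathcal{L}_\mathcal{N}(\mathbf{z})$, and by concentrating the mass of $q$ near a point at which $\mathcal{L}_\mathcal{N}$ approaches its supremum this upper bound is attained (or approached in the limit). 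Hence $J[\mathcal{L}_\mathcal{N}] = \sup_{\mathbf{z}\in\Gamma}\mathcal{L}_\mathcal{N}(\mathbf{z})$, so the outer problem collapses to $\min_{A_\mathcal{N}} \sup_{\mathbf{z}\in\Gamma}\mathcal{L}_\mathcal{N}(\mathbf{z})$.

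Next I would invoke nonnegativity a second time. Because $\mathcal{L}_\mathcal{N} \geq 0$, the value $J[\mathcal{L}_\mathcal{N}] = \sup_{\mathbf{z}}\mathcal{L}_\mathcal{N}(\mathbf{z})$ is nonnegative and equals zero exactly when $\mathcal{L}_\mathcal{N}(\mathbf{z}) = 0$ for every $\mathbf{z}\in\Gamma$. The standing hypothesis that $\mathcal{L}_\mathcal{N}(\mathbf{z}) = 0$ admits a solution then guarantees that the infimum of the outer minimization is $0$ and is attained. Consequently $A_\mathcal{N}^{\ast}$ solves \eqref{equ:min-max} if and only if $J[\mathcal{L}_{\mathcal{N}^{\ast}}] = 0$, which by the previous sentence is equivalent to $\mathcal{L}_{\mathcal{N}^{\ast}}(\mathbf{z}) = 0$ for all $\mathbf{z}\in\Gamma$, i.e. to $A_\mathcal{N}^{\ast}$ being a solution of the residual equation. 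This single chain of equivalences delivers both directions of the claim at once, so no separate ``if'' and ``only if'' arguments are needed.

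I expect the only delicate step to be the first one, namely justifying $J[\mathcal{L}_\mathcal{N}] = \sup_{\mathbf{z}}\mathcal{L}_\mathcal{N}(\mathbf{z})$ rigorously. The subtlety is that the maximizing $q$ is a concentrating (Dirac-type) measure that need not belong to the class of admissible smooth densities, so the written $\max$ should be read as a supremum over a measure class, or handled by a standard mollification/limiting argument; assuming $\mathcal{L}_\mathcal{N}$ is continuous on $\Gamma$ removes any gap between the supremum and the essential supremum. Everything downstream is a direct consequence of the pointwise nonnegativity of $\mathcal{L}_\mathcal{N}$ together with the solvability assumption, and requires no further computation. I would also note that the equivalence is robust to restricting $q$ to a regularized family such as the entropy-penalized $q^{\ast}(\mathbf{z}) \propto \exp(\kappa_h\mathcal{L}_\mathcal{N}(\mathbf{z}))$ used later: the zero-residual minimizer makes every weighted loss vanish, so it remains a solution of the outer problem under any such choice of sampling distribution.
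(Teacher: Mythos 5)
Your proposal is correct and follows essentially the same route as the paper: both arguments rest on the observation that nonnegativity of $(\mathcal{L}_\mathcal{N},q)$ together with the assumed existence of a zero-residual solution forces the minimax value to be zero, and that a candidate attains this value if and only if its residual vanishes identically. Your explicit identification $J[\mathcal{L}_\mathcal{N}]=\sup_{\mathbf z}\mathcal{L}_\mathcal{N}(\mathbf z)$ and the attendant care about concentrating measures versus admissible densities is a refinement the paper skips (it only exhibits some $\hat q$ with $(\mathcal{L}_{\hat{\mathcal{N}}},\hat q)>0$), but it does not change the structure of the argument.
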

\begin{proof}
Suppose $A_{\mathcal{N}}^{\ast}$ is the solution for $\mathcal{L}_\mathcal{N}(\mathbf{z}) = 0$, it satisfies $(\mathcal{L}_{\mathcal{N}^{\ast}},q)=0$ for any $q$, i.e., $J[\mathcal{L}_{\mathcal{N}^{\ast}}] =0$. Therefore, $A_{\mathcal{N}}^{\ast}$ is a solution for the minimax problem \eqref{equ:min-max}. 
On the other hand, if $\hat{A}_{\mathcal{N}}$ is the minimizer for problem \eqref{equ:min-max} but not the solution for $\mathcal{L}_\mathcal{N}(\mathbf{z}) = 0$, then there exists $\hat{q} \in V$ such that $(\mathcal{L}_{\hat{\mathcal{N}}},\hat{q})>0$. However, $(\mathcal{L}_{\mathcal{N}^{\ast}},q)=0$ for all $q$, which contradicts the assumption that $\hat{A}_{\mathcal{N}}$ is the minimizer.
\end{proof}

Proposition \ref{thm:eq1} shows that direct construction of FES $A(\mathbf z)$ can be reformulated as an adversarial learning of an optimal solution $A_\mathcal{N}(\mathbf z)$ for the min-max problem \eqref{equ:min-max}. Accordingly, the training consists of two components: the minimization step optimizes the NN surrogate with the current training set; the maximization step explores the largest residual region for the current surrogate and essentially establishes an adaptive sampling of the training set based on the \emph{posterior} residual $\mathcal{L_N}(\mb z)$.  

To numerically solve the maximization problem, certain regularization needs to be introduced. Otherwise, the maximization problem will simply yield a delta measure, i.e., $\delta(\mathbf{z}-\mathbf{z}^*)$, where $\mathbf{z}^{\ast} = \arg\max \mathcal{L}_\mathcal{N}(\mathbf{z})$. Since the sampling needs to simultaneously identify the max-residual regime and explore the uncharted phase space, we introduce the entropy-based regularization \cite{wang2020reinforcement,gao2022state} (see also Refs. \cite{gulrajani2017improved,miyato2018spectral,tang2023adversarial} for gradient-based regularization), and the max-problem is reformulated by  
\begin{equation}
    \min_{q} \int (-\mathcal{L}_\mathcal{N}(\mathbf{z}) + \kappa_h^{-1} \ln q(\mathbf{z})) q(\mathbf{z}) \mathrm{d} \mathbf{z}.
\label{eq:entropy_regular}    
\end{equation}
This problem is convex with a unique global minimum at $q^*(\mathbf{z})=\exp(-\kappa_h \mathcal{L}_\mathcal{N}^-(\mathbf{z}))/Z^*$, where $Z^* = \int \exp(-\kappa_h \mathcal{L}_\mathcal{N}^-(\mathbf{z})) \mathrm{d} \mathbf{z}$  and $\mathcal{L}_\mathcal{N}^-(\mathbf{z}) = -\mathcal{L}_\mathcal{N}(\mathbf{z})$. The parameter $\kappa_h$ is a Lagrangian multiplier, balancing the weight between the max-residual concentration and the scope of exploration, and can be interpreted as the inverse of temperature in statistical physics. An elevated $\kappa_h^{-1}$ induces a distribution closer to a uniform distribution. Conversely, a diminished $\kappa_h^{-1}$ induces a distribution concentrated near the max-residual point.

Although $q^*(\mathbf{z})=\exp(-\kappa_h \mathcal{L}_\mathcal{N}^-(\mathbf{z}))/Z^*$ solves Eq. \eqref{eq:entropy_regular}, we emphasize that this solution is only formal. In particular, the analytical formulation is unknown since it further depends on the residual $\mathcal{L}_\mathcal{N}$ \eqref{equ:pde} and the exact FES $A(\mb z)$. Unfortunately, the numerical evaluation of $\nabla_{\mb z} A(\mb z^{\ast})$ at an arbitrary point relies on conducting restrained MD at $\mb z = \mb z^{\ast}$, which can be computationally expensive or even thermodynamically inaccessible. 
This constraint differs from conventional high-dimensional regression or PDE problems, where the residual can be freely queried within the domain.
Consequently, common sampling approaches based on the MCMC \cite{robert2004monte} and Langevin dynamics \cite{Gareth_Richard_Bernoulli_1996} can not be directly used. Specifically, the effectiveness of the MCMC method relies on the variance of the underlying distribution. However, most configurations in the MD phase space are thermodynamically inaccessible, resulting in large variance and extremely low sampling efficiency. On the other hand, the Langevin-based sampling method depends on $-\nabla_{\mb z} \ln q^*(\mathbf{z})$, which, in turn, relies on computing $\nabla_{\mb z} \mathcal{L_N^{-}}(\mb z)$ and essentially the gradient of the mean force, i.e., $\nabla_{\mb z} F(\mb z) = - \nabla_{\mb z} \otimes \nabla_{\mb z} A(\mb z)$ from restrained MD simulations. However, we can only sample the value of the mean force $F(\mb z)$ rather than its gradient.

Inspired by the consensus-based sampling method \cite{carrillo2022consensus}, we propose a modified interacting particle system governed by the McKean SDE. The gradient-free nature enables us to efficiently construct the mean-field non-local conservative potential and the coherent noise without the evaluation of $\nabla_{\mb z} F(\mb z)$, and sample the particle distribution as an accurate approximation of $q^{\ast}(\mb z)$ illustrated below.

\subsection{Exploitation and exploration in the max-problem}
To approximate the target distribution $q^{\ast}(\mathbf z)$, particularly in the vicinity of the max-residual point $\mathbf{z}^*$, we exploit Laplace's principle in the large deviations theory, i.e., 
\begin{equation}
    \lim_{\kappa\to\infty}\left(-\frac{1}{\kappa}\log\left(\int \exp{(-\kappa f(\mathbf{z}))}\mathrm{d}\rho^*(\mathbf{z})\right)\right) = f(\mathbf{z}^*),
\end{equation}
which holds true for any compactly supported probability measure $\rho^{\ast}$, where $\mathbf{z}^{\ast} \in \text{supp}(\rho^*)$ uniquely minimizes the function $f$. 
This enables us to identify the max-residual point from a collection of samples $\left\{\mathbf z^{i}\right\}_{i=1}^{N_w}$ by the first-order moment  under the weighted density function $p(\mathbf{z})$, i.e.,  
\begin{equation}
    \mathbf{m}_{\kappa_l} = \int \mathbf{z} \frac{p(\mathbf{z})}{\int p(\mathbf z') \rho(\mathrm d \mathbf z')}\rho(\mathrm{d}  \mathbf{z})  \approx 
    \sum_{i=1}^{N_w} \mathbf z^i \hat{p}(\mathbf z^i) 
    \quad \hat{p} (\mathbf z) =  \frac{\exp {(-\kappa_l \mathcal{L}_\mathcal{N}^-(\mathbf{z}))}}{\sum_{i=1}^{N_w}\exp {(-\kappa_l \mathcal{L}_\mathcal{N}^-(\mathbf{z}^i)})}
\label{eq:m_p}
\end{equation}
where $\kappa_l^{-1}$ represents a low temperature limit, $p(\mb z) = \exp(-k_l \mathcal{L_N^-}(\mb z))$ and $ \rho  = \frac{1}{N_w}\sum_{i=1}^{N_w}\delta_{\mathbf{z}^i}$ is the empirical measure.  
In order to sample the empirical measure towards $q^*$, we treat each sampler $\mathbf{z}^i$ as a random walker $\mathbf{z}^i_t$ governed by the following McKean SDE 
\begin{equation}
\label{equ:langevin}
\mathrm d \mathbf{z}^i_t = - \frac{1}{\gamma}\nabla_\mathbf{z} G(\mathbf{z}^i_t) \mathrm d t + \sqrt{\frac{2}{\kappa_h\gamma}} \mathrm d \mb W^{i}_t, \quad i = 1, \cdots, N_w, 
\end{equation}
where $G(\mathbf{z}_t)=\frac{1}{2} (\mathbf{z}_t-\mathbf{m}_{\kappa_l,t})^\top V_{\kappa_l,t}^{-1}(\mathbf{z}_t-\mathbf{m}_{\kappa_l,t})$ denotes an adaptively constructed mean-field conservative potential function.
The formulations of $\mathbf{m}_{\kappa_l,t}, V_{\kappa_l,t}$ are specified in \eqref{equ:mV} with the rationale discussed in the next section. Consequently, $G (\mathbf{z}_t)$ navigates the random walkers (i.e., individual particles) towards $\mathbf{m}_{\kappa_l,t}$, which represents the region of large residual error. 
%
The second term in Eq. \eqref{equ:langevin} represents a noise term where $\gamma$ represents the friction coefficient and 
$\mb W^{i}_t$ represents the standard $M\mhyphen$dimensional Brownian motion. 


The coupling of the conservative and stochastic terms maintains a relatively high temperature $\kappa_h^{-1}$. Also, a large friction coefficient  $\gamma$ is applied such that the distribution of walkers $q_t(\mathbf{z})$ is almost always Gaussian during the evolution. As shown in Prop. \ref{prop:invariant_dist}, by properly choosing the form of $V_{\kappa_l,t}$,
the distribution $q_t(\mathbf{z})$ converges to  $\propto \exp(-\kappa_h G(\mathbf{z}))$ characterized by $\mathbf{m}_{\kappa_l, \infty}$ and $V_{\kappa_l, \infty}$.
Accordingly, the balance between exploitation and exploration is controlled using two temperatures $\kappa_l^{-1}$ and $\kappa_h^{-1}$. As $\kappa_l^{-1}$ decreases, the distribution concentrates near the max-residual points, reflecting the role of exploitation. Conversely, as $\kappa_h^{-1}$ increases, the distribution smoothens progressively, enhancing the exploration of the uncharted regions.

\begin{remark}
The present sampling dynamics \eqref{equ:langevin} takes a different form from the one proposed in Ref. \cite{carrillo2022consensus}. Specifically, the conservative potential $G(\mb z)$ is constructed by both the first and second moment which enables us to conveniently characterize $\mathcal{L_N}(\mb z)$ near the max-residual region with Laplace’s approximation, whereas it is only determined by the first moment in Ref. \cite{carrillo2022consensus}. Also, $\kappa_h$ is introduced as an independent parameter to modulate the exploration of the phase space. In principle, the consensus dynamics in Ref. \cite{carrillo2022consensus} could be also used for adaptive sampling with the proper choice of the multiplicative noise term. We proposed the modified form \eqref{equ:langevin} such that the sampling parameters can be chosen with a clear physical interpretation.
\end{remark}

\subsection{Convergence analysis}
In this subsection, we analyze the long-time behavior of the sampling dynamics \eqref{equ:langevin}. We show that the particle distribution converges exponentially fast to a steady state as an approximation of the target residual-based distribution $q^*(\mathbf{z})\propto \exp(-\kappa_h \mathcal{L}_\mathcal{N}^-(\mathbf{z}))$ under mild conditions. 
Specifically, we choose $\mathbf{m}_{\kappa_l,t}$ and $V_{\kappa_l,t}$ in the form of 
\begin{equation}
\begin{aligned}
\label{equ:mV}
\mathbf{m}_{\kappa_l,t} = \mathcal M _{\kappa_l}(\rho_t) \quad
V_{\kappa_l,t}=  \mathcal V _{\kappa_l}(\rho_t),
\end{aligned}
\end{equation}
where $ \rho_t  = \frac{1}{N_w}\sum_{i=1}^{N_w}\delta_{\mathbf{z}^i_t}$ is the empirical measure and 
\begin{equation}
    \begin{aligned}
    \mathcal M _{\kappa_l}(\rho) &=  \int \mathbf z \frac{ p(\mathbf z)}{\int p(\mathbf z') \rho( \mathrm d \mathbf z' )} \rho( \mathrm d \mathbf z ) ,\\
        \mathcal V _{\kappa_l}(\rho) &= \kappa_t \int \mathbf (\mb z-\mathcal M _{\kappa_l}(\rho))\otimes(\mb z-\mathcal M _{\kappa_l}(\rho))  \frac{p(\mathbf z)}{\int p(\mathbf z') \rho( \mathrm d \mathbf z' )}\rho( \mathrm d \mathbf z ).
    \end{aligned}
\end{equation}
%
In particular, we show that by choosing $\kappa_t = \kappa_l + \kappa_h$,  Eq. \eqref{equ:langevin} converges to the target distribution $q^*(\mathbf{z})$. 
\begin{proposition}
\label{prop:invariant_dist}
Suppose $\mathcal{L}_\mathcal{N}^-(\mathbf{z})$ takes a local quadratic approximation in form of 
$\frac{1}{2} (\mathbf{z}-\mathbf{\mu})^\top\Sigma^{-1}(\mathbf{z}-\mathbf{\mu})$. 
If the dynamics converge to an invariant distribution, then the stationary density is given by 
\begin{equation}
    q_\infty = \frac{\exp{(-\kappa_h \mathcal{L}_\mathcal{N}^-(\mathbf{z}))}}{\int \exp {(-\kappa_h \mathcal{L}_\mathcal{N}^-(\mathbf{z}))} \mathrm{d} \mathbf{z}},
\end{equation} by choosing $\kappa_t = \kappa_l + \kappa_h$.
\end{proposition}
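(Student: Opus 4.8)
The plan is to exploit the fact that the local quadratic assumption renders the drift in \eqref{equ:langevin} linear, so that at stationarity the McKean SDE reduces to a frozen-coefficient Ornstein--Uhlenbeck (OU) process. First I would pass to the mean-field limit $N_w \to \infty$, replacing the empirical measure $\rho_t$ by the law $q_t$ of a single walker, and suppose $q_t \to q_\infty$ with the coefficients settling to constants $\mathbf{m}_{\kappa_l,\infty} = \mathcal{M}_{\kappa_l}(q_\infty)$ and $V_{\kappa_l,\infty} = \mathcal{V}_{\kappa_l}(q_\infty)$. With these coefficients frozen, $G(\mathbf{z})$ is a fixed quadratic form and \eqref{equ:langevin} is a linear SDE whose invariant density is the Gibbs measure $q_\infty(\mathbf{z}) \propto \exp(-\kappa_h G(\mathbf{z}))$; the standard computation identifies the inverse temperature of the overdamped dynamics as $\kappa_h \gamma$, so the friction cancels and $q_\infty = \mathcal{N}(\mathbf{m}_{\kappa_l,\infty},\, \kappa_h^{-1} V_{\kappa_l,\infty})$ is Gaussian.

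The second step is to close the loop by self-consistency. Writing $C_\infty := \kappa_h^{-1} V_{\kappa_l,\infty}$, I would insert the Gaussian $q_\infty$ back into the mean-field maps $\mathcal{M}_{\kappa_l}, \mathcal{V}_{\kappa_l}$. Since $p(\mathbf{z}) = \exp(-\kappa_l \mathcal{L}_\mathcal{N}^-(\mathbf{z}))$ and, under the quadratic hypothesis, $\mathcal{L}_\mathcal{N}^-(\mathbf{z}) = \frac{1}{2}(\mathbf{z}-\mu)^\top \Sigma^{-1}(\mathbf{z}-\mu)$, the reweighted density $p(\mathbf{z}) q_\infty(\mathbf{z})$ is a product of two Gaussians, hence Gaussian; completing the square gives its covariance $\tilde C = (\kappa_l \Sigma^{-1} + C_\infty^{-1})^{-1}$ and mean $\tilde m = \tilde C (\kappa_l \Sigma^{-1} \mu + C_\infty^{-1} \mathbf{m}_{\kappa_l,\infty})$. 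The definitions in \eqref{equ:mV} then read $\mathbf{m}_{\kappa_l,\infty} = \tilde m$ and $V_{\kappa_l,\infty} = \kappa_t \tilde C$, which combined with $C_\infty = \kappa_h^{-1} V_{\kappa_l,\infty}$ produce the fixed-point relation $C_\infty = (\kappa_t/\kappa_h)(\kappa_l \Sigma^{-1} + C_\infty^{-1})^{-1}$.

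Finally I would solve this fixed point. Inverting the covariance relation gives $C_\infty^{-1} = \frac{\kappa_h \kappa_l}{\kappa_t - \kappa_h}\Sigma^{-1}$, while the mean equation reduces, after cancelling the common $C_\infty^{-1}\mathbf{m}_{\kappa_l,\infty}$ term, to $\mathbf{m}_{\kappa_l,\infty} = \mu$ for any admissible $\kappa_t$. Substituting the prescribed choice $\kappa_t = \kappa_l + \kappa_h$ collapses the prefactor to $\kappa_h$, so $C_\infty = \kappa_h^{-1}\Sigma$ and $q_\infty = \mathcal{N}(\mu, \kappa_h^{-1}\Sigma) \propto \exp(-\kappa_h \mathcal{L}_\mathcal{N}^-(\mathbf{z}))$, which is precisely the claimed stationary density.

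I expect the main obstacle to be the self-consistency bookkeeping: because $\mathbf{m}_{\kappa_l,\infty}$ and $V_{\kappa_l,\infty}$ are themselves functionals of the very Gaussian they parameterize, one must apply the product-of-Gaussians reweighting and the OU invariant-measure scaling consistently, and verify that the resulting fixed point is the unique positive-definite solution. By contrast, the \emph{existence} of a stationary state is granted by hypothesis, so no ergodicity or contraction estimate is needed, and the mean-field reduction (rather than a propagation-of-chaos argument at finite $N_w$) is taken as the working setting.
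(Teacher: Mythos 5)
Your proposal is correct and follows essentially the same route as the paper's proof: freeze the mean-field coefficients at stationarity to obtain an Ornstein--Uhlenbeck process whose Gibbs measure is $\mathcal{N}(\mathbf m_{\kappa_l,\infty},\kappa_h^{-1}V_{\kappa_l,\infty})$, use the product-of-Gaussians reweighting to derive the self-consistency equations for the mean and covariance, and verify that $\kappa_t=\kappa_l+\kappa_h$ yields the fixed point $(\mu,\kappa_h^{-1}\Sigma)$. Your explicit general formula $C_\infty^{-1}=\tfrac{\kappa_h\kappa_l}{\kappa_t-\kappa_h}\Sigma^{-1}$ is a slightly more detailed rendering of the algebra the paper leaves as ``easy to show,'' but it is not a different argument.
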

\begin{proof}
Let $q_{\infty}(\mathbf z)$ denote the invariant distribution of Eq. \eqref{equ:langevin}. Then $q_{\infty} (\mathbf z)$ must be the invariant distribution of the following SDE 
\begin{equation}
\mathrm d \mathbf z = - \frac{1}{\gamma} V^{-1}_{\kappa_l, \infty}(\mathbf z - \mathbf m_{\kappa_l, \infty})  \mathrm d t + \sqrt{\frac{2}{\gamma \kappa_h} } \mathrm d \mb W_t ,
\label{eq:asym_Langevin}
\end{equation}
where $\mathbf m_{\kappa_l, \infty}$ and $\kappa_t^{-1} V_{\kappa_l, \infty}$ are the mean and the covariance matrix of the re-weighted density $\propto ~q_{\infty}(\mathbf z) e^{-\kappa_l \mathcal{L}_{\mathcal{N}}^{-}(\mathbf z)}$. With the fluctuation-dissipation relation for Eq. \eqref{eq:asym_Langevin}, we can show $q_{\infty} (\mathbf z)$ follows the Gaussian distribution with mean $\mathbf m_{\kappa_l, \infty}$ and covariance matrix $\kappa_h^{-1} V_{\kappa_l, \infty}$. 

Since $\mathcal{L}_{\mathcal{N}}(\mathbf z) = \frac{1}{2} (\mathbf z-\mathbf\mu)^\top\Sigma^{-1}(\mathbf z-\mathbf\mu)$ is quadratic, the re-weighted density of a Gaussian distribution $q(\mathbf z) \sim \mathcal{N}(\mathbf m, V)$ remains Gaussian, i.e., $$
q(\mathbf z) e^{-\kappa_l \mathcal{L}_{\mathcal{N}}^{-}(\mathbf z)}  \propto ~  \mathcal{N} ({\mathbf m}_{\kappa_l}, {V}_{\kappa_l}),
$$ where ${\mathbf m}_{\kappa_l}$ and ${V}_{\kappa_l}$ are defined by
\begin{equation}
\begin{split}
\mathbf m_{\kappa_l}(\mathbf m , V) &= (V^{-1} + \kappa_l \Sigma ^{-1}) ^{-1} (\kappa_l\Sigma^{-1}\mathbf\mu+V^{-1}\mathbf m),\\
V_{\kappa_l}(\mathbf m , V) &= (V^{-1} + \kappa_l \Sigma ^{-1}) ^{-1}.
\end{split}
\label{eq:m_v_reweight_Gauss}
\end{equation}
Hence, the mean and covariance of the steady-state Gaussian distribution satisfies
{\color{black}\begin{equation}
\begin{aligned}
\mathbf m_{\kappa_l, \infty} &= (\kappa_h V_{\kappa_l, \infty}^{-1} + \kappa_l \Sigma ^{-1})^{-1}  (\kappa_l\Sigma^{-1}\mathbf\mu+\kappa_h V_{\kappa_l, \infty}^{-1} \mathbf m_{\kappa_l, \infty}),\\ 
V_{\kappa_l, \infty}  & = \kappa_t (\kappa_h V_{\kappa_l, \infty}^{-1} + \kappa_l \Sigma ^{-1}) ^{-1}.
\end{aligned}
\nonumber
\end{equation}}
It is easy to show that by choosing $\kappa_t = \kappa_l + \kappa_h$, $\mathbf m_{\kappa_l, \infty}$ and $V_{\kappa_l, \infty}$ recovers $\mathbf\mu$ and $\Sigma$, respectively, and the invariant distribution takes the form
\begin{equation}
q_{\infty}(\mathbf z) \sim \mathcal{N}\left(\mathbf\mu, \kappa_h^{-1} \Sigma\right).
\nonumber
\end{equation}

\end{proof}

We emphasize that Eqs. \eqref{equ:langevin} and \eqref{equ:mV} differ from the standard Langevin dynamics; the sampling relies on neither the explicit knowledge of the target distribution $q^*(\mathbf{z})\propto \exp(-\kappa_h \mathcal{L}_\mathcal{N}^-(\mathbf{z}))$ nor the numerical evaluation of $\nabla_{\mb z} q^*(\mathbf{z})$ for individual sampling points. 
The quadratic assumption of the loss function $\mathcal{L}_\mathcal{N}^-(\mathbf{z})$ is due to the fact that we are mainly interested in the thermodynamically accessible region near the max-residual point. Under the low-temperature limit, the local regime can be well characterized by the first and second moment following Laplace's principle.  

Next, we show that, under appropriate conditions, the sampling dynamics \eqref{equ:langevin} converges to the target distribution exponentially fast. In particular, we 
consider the time discretization given by 
{\color{black}
\begin{equation}\label{equ:discretize}
    \mathbf{z}^i _{t+1} =  \mathbf{z}^i_{t}  - \frac{ \mathcal V_{\kappa_l}^{-1} (\rho_t)}{\gamma} \left(\mathbf{z}^i_t - \mathcal M _{\kappa_l} (\rho_t) \right) \delta t  + \sqrt{\frac{2\delta t}{\gamma \kappa_h }} \eta^i_t,
\end{equation}
}
where $\eta^i _t$ are independent $\mathcal{N}(0,\mathbb{I}_M)$ random variables and $\delta t$ is the time step. 
\begin{lemma}
    If the initial law $\rho_0$ for \eqref{equ:discretize} is Gaussian, so is the law for any $t \in \mathbb N_{>0}$. Also, its evolution is characterized by the first and second moments $(\mathbf m _t , V_t )$ of $\rho_t$, which are governed by the recurrence relation
 {\color{black}   
    \begin{equation}
        \begin{aligned}
            \mathbf m _{t+1}  & = \left(\mathbb{I}_M-\frac{\delta t  V_{k_l}^{-1}(\mathbf m_t, V_t) }{\gamma}\right)\mathbf m_t  + \frac{\delta t  V^{-1}_{k_l}(\mathbf m_t, V_t) }{\gamma} \mathbf m_{k_l} (\mathbf m_t, V_t) ,\\
            V _{t+1}  & = \left(\mathbb{I}_M-\frac{\delta t  V_{k_l}^{-1}(\mathbf m_t, V_t) }{\gamma}\right)^2 V_t + \frac{2\delta t}{\gamma\kappa_h} ,
        \end{aligned}
    \end{equation}
}
    where $\mathbf m_{k_l} (\mathbf m, V) = \mathcal M _{\kappa_l} (g(\cdot;\mathbf m , V))$ and $V_{k_l} (\mathbf m, V) = \mathcal V _{\kappa_l} (g(\cdot;\mathbf m , V))$.
\end{lemma}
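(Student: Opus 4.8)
The plan is to argue by induction on $t$, exploiting the fact that the one–step map \eqref{equ:discretize} collapses to an affine transformation plus independent Gaussian noise once the current law is Gaussian, so that Gaussianity is preserved (a Gaussian–closure argument). I work in the mean–field setting in which $\rho_t$ denotes the common law of the walkers (consistent with the McKean interpretation), so that the feedback functionals $\mathcal M_{\kappa_l}(\rho_t)$ and $\mathcal V_{\kappa_l}(\rho_t)$ appearing in \eqref{equ:mV} are \emph{deterministic} functionals of $\rho_t$; when $\rho_t=\mathcal N(\mathbf m_t,V_t)$ they reduce by definition to $\mathbf m_{k_l}(\mathbf m_t,V_t)$ and $V_{k_l}(\mathbf m_t,V_t)$. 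The base case $t=0$ is exactly the hypothesis on $\rho_0$.

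For the inductive step, suppose $\rho_t=\mathcal N(\mathbf m_t,V_t)$ and abbreviate $\bar V=V_{k_l}(\mathbf m_t,V_t)$ and $\bar{\mathbf m}=\mathbf m_{k_l}(\mathbf m_t,V_t)$. Then \eqref{equ:discretize} reads
\begin{equation}
\mathbf{z}^i_{t+1} = A_t\,\mathbf{z}^i_t + \frac{\delta t\,\kappa_t^{-1}\bar V^{-1}}{\gamma}\,\bar{\mathbf m} + \sqrt{\frac{2\delta t}{\gamma\kappa_h}}\,\eta^i_t, \qquad A_t := \mathbb{I}_M - \frac{\delta t\,\kappa_t^{-1}\bar V^{-1}}{\gamma},
\nonumber
\end{equation}
which exhibits $\mathbf z^i_{t+1}$ as an affine image of the Gaussian $\mathbf z^i_t$ plus the independent Gaussian $\eta^i_t\sim\mathcal N(0,\mathbb I_M)$. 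Hence $\rho_{t+1}$ is Gaussian, closing the induction. Propagating the first two moments through the affine map gives at once $\mathbf m_{t+1}=A_t\mathbf m_t+\frac{\delta t\,\kappa_t^{-1}\bar V^{-1}}{\gamma}\bar{\mathbf m}$, which is precisely the stated mean recurrence, together with the covariance update $V_{t+1}=A_tV_tA_t^\top+\frac{2\delta t}{\gamma\kappa_h}\mathbb I_M$.

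Two points then require care, and I expect the second to be the main obstacle. First, the reweighted moments $\mathbf m_{k_l},V_{k_l}$ must be well defined, i.e. $p(\mathbf z)\,g(\mathbf z;\mathbf m_t,V_t)$ and its first two moments integrable; under the local quadratic model of Prop.~\ref{prop:invariant_dist} the weight $p=\exp(-\kappa_l\mathcal L_{\mathcal N}^-)$ is Gaussian–shaped, so the product is an unnormalized Gaussian and $\mathbf m_{k_l},V_{k_l}$ are available in the closed form \eqref{eq:m_v_reweight_Gauss}. Second, since $\bar V^{-1}$ is symmetric, so is $A_t$, and therefore $A_tV_tA_t^\top=A_tV_tA_t$; matching this to the written form $\left(\mathbb I_M-\frac{\delta t\,\kappa_t^{-1}V_{k_l}^{-1}}{\gamma}\right)^2V_t=A_t^2V_t$ requires $A_t$ and $V_t$ to commute, equivalently $[V_t,\Sigma^{-1}]=0$. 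This commutativity is the delicate step: it is not automatic for an arbitrary $V_0$, but it holds whenever $V_0$ is simultaneously diagonalizable with $\Sigma$ (in particular in the isotropic case), and it is then preserved along the recurrence because $A_t$ is a matrix function of $V_t$ and $\Sigma^{-1}$. I would therefore either state the covariance update in the symmetric–congruence form $A_tV_tA_t^\top+\frac{2\delta t}{\gamma\kappa_h}\mathbb I_M$, which is the exact identity, or record the commutativity $[V_t,\Sigma^{-1}]=0$ as a standing assumption under which it coincides with the written $A_t^2V_t$; the Gaussianity and the mean recurrence are otherwise routine.
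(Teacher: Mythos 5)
Your argument is essentially the paper's: the paper writes the one-step update as a convolution of $\rho_t$ with a Gaussian kernel (its Eq.~\eqref{equ:discrete_time_distribution}) and concludes Gaussianity is preserved, then states that the moment recurrences follow "by explicitly computing the first and second moments" --- which is exactly your affine-map-plus-independent-noise induction. Your caveat about the covariance update is well taken and is not addressed in the paper: the exact identity is $V_{t+1}=A_tV_tA_t^\top+\frac{2\delta t}{\gamma\kappa_h}\mathbb I_M$, and since $A_t=\mathbb I_M-\frac{\delta t\,\kappa_t^{-1}}{\gamma}V_{k_l}^{-1}(\mathbf m_t,V_t)$ is symmetric, this coincides with the stated $A_t^2V_t$ only when $V_{k_l}^{-1}(\mathbf m_t,V_t)$ commutes with $V_t$ (under the quadratic model, when $[V_t,\Sigma^{-1}]=0$); otherwise $A_t^2V_t$ need not even be symmetric. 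Stating the recurrence in the congruence form, or recording the commutativity assumption, is the right fix.
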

\begin{proof}
The time evolution of the law of the solution \eqref{equ:discretize} is governed by the following discrete-time dynamics on probability density:
{\color{black}
\begin{equation}\label{equ:discrete_time_distribution}
    \rho_{t+1}(\mathbf{z}) =  \int_{\mathbb R^M} g\left(\mathbf z; \mathbf z' -\frac{\delta t  V_{\kappa_l}^{-1} (\rho_t)}{\gamma} \left(\mathbf{z}' - \mathcal M _{\kappa_l} (\rho_t) \right) , \frac{2\delta t}{\gamma\kappa_h}  \right)  \rho_t (\mathbf z') \mathrm{d} \mathbf z',
\end{equation}
}
where $g(\mathbf z; \mathbf m,V) =  \frac{1}{\sqrt{(2\pi)^M \det (V)}} \exp(-\frac{1}{2} 
\left(\mathbf{z} - \mathbf m )^\top V^{-1} (\mathbf{z} - \mathbf m )  \right)$.
Since both \(\rho_t\) and \(g\) are Gaussian, the convolution in \eqref{equ:discrete_time_distribution} results in another Gaussian distribution. Therefore, \(\rho_t\) remains Gaussian for all \(t > 0\). 
The update rules for the mean and covariance can be derived by explicitly computing the first and second moments of \(\rho_{t+1}\) from the integral form \eqref{equ:discrete_time_distribution}.
\end{proof}

\begin{proposition}
    Assume that $\mathcal{L}_\mathcal{N}^-(\mathbf{z})$ takes a local quadratic approximation in form of $\frac{1}{2} (\mathbf{z}-\mathbf{\mu})^\top \Sigma^{-1}(\mathbf{z}-\mathbf{\mu})$ and initial condition $(\mathbf m_0,V_0) \in \mathbb R^M \times S^M_{++}$, where $S^M_{++}$ denotes the set of symmetric strictly positive definite matrices in $\mathbb R^{M\times M}$. Then the first and second moment of the empirical distribution $\mathbf m_t $ and $V_t$ converge to $\mu$ and $\kappa_h^{-1} \Sigma$ exponentially fast if $\delta t$ is sufficiently small.  To be more specific, we have 
    \begin{equation}
    \begin{aligned}
        \mathbf m_{t}  &= \mu +  \left(\mathbb{I}_M -  \frac{\delta t \kappa_t^{-1} \kappa_l \Sigma^{-1}}{\gamma}\right)^{t}\mathbf (\mathbf m_0-\mu)\\
        V_{t}  & = \kappa_h^{-1} \Sigma +  \Sigma\left(\mathbb I_M -\frac{2\delta t \kappa_l\Sigma^{-1}}{\gamma(\kappa_l + \kappa_h)} \right)^{t}\left(\Sigma^{-1}V_0 -  \kappa_h^{-1} \mathbb I_M \right).
    \end{aligned}
    \end{equation}
\end{proposition}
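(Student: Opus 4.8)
The plan is to exploit the Gaussian structure guaranteed by the preceding lemma: since $\mathcal{L}_\mathcal{N}^-$ is quadratic and each $\rho_t$ is Gaussian, the reweighting identities \eqref{eq:m_v_reweight_Gauss} give the closed forms $\mathbf m_{k_l}(\mathbf m_t,V_t) = (V_t^{-1}+\kappa_l\Sigma^{-1})^{-1}(\kappa_l\Sigma^{-1}\mu + V_t^{-1}\mathbf m_t)$ and $V_{k_l}^{-1}(\mathbf m_t,V_t) = V_t^{-1}+\kappa_l\Sigma^{-1}$. Substituting these into the recurrence of the preceding lemma reduces the problem to two explicit matrix recurrences, which I would solve separately, establishing in each case that the associated iteration matrix is a contraction once $\delta t$ falls below an explicit threshold set by $\lambda_{\min}(\Sigma)$.

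For the mean I would write $\mathbf m_{t+1} = (\mathbb{I}_M - B_t)\mathbf m_t + B_t\,\mathbf m_{k_l}(\mathbf m_t,V_t)$ with $B_t = \delta t\,\kappa_t^{-1}V_{k_l}^{-1}(\mathbf m_t,V_t)/\gamma$. The key observation is a cancellation: because $V_{k_l}^{-1}\mathbf m_{k_l} = \kappa_l\Sigma^{-1}\mu + V_t^{-1}\mathbf m_t$, the factor $V_{k_l}^{-1}$ in $B_t$ cancels the inverse defining $\mathbf m_{k_l}$, and the two $V_t^{-1}\mathbf m_t$ contributions cancel against each other. What remains is independent of $V_t$, namely $\mathbf m_{t+1}-\mu = (\mathbb{I}_M - \delta t\,\kappa_t^{-1}\kappa_l\Sigma^{-1}/\gamma)(\mathbf m_t-\mu)$. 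Iterating this exact linear recurrence yields the stated geometric formula, and since $\Sigma^{-1}\succ 0$ the iteration matrix has spectral radius strictly below one whenever $\delta t < 2\gamma\kappa_t\lambda_{\min}(\Sigma)/\kappa_l$, giving exponential decay of $\mathbf m_t$ to $\mu$.

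For the covariance I would substitute $V_{k_l}^{-1} = V_t^{-1}+\kappa_l\Sigma^{-1}$ into $V_{t+1} = (\mathbb{I}_M - B_t)^2 V_t + 2\delta t\,\mathbb{I}_M/(\gamma\kappa_h)$ and expand. Using $B_t V_t = (\delta t\,\kappa_t^{-1}/\gamma)(\mathbb{I}_M + \kappa_l\Sigma^{-1}V_t)$, the order-$\delta t$ part of $-2B_tV_t$ supplies a drift $-(2\delta t\,\kappa_t^{-1}\kappa_l\Sigma^{-1}/\gamma)V_t$ together with a constant $-(2\delta t\,\kappa_t^{-1}/\gamma)\mathbb{I}_M$; invoking $\kappa_t = \kappa_l+\kappa_h$ from Proposition \ref{prop:invariant_dist}, this constant combines with the diffusion term $2\delta t\,\mathbb{I}_M/(\gamma\kappa_h)$ into $2\delta t\,\kappa_l\mathbb{I}_M/(\gamma\kappa_h\kappa_t)$. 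Collecting terms gives the affine recurrence $V_{t+1} = (\mathbb{I}_M - 2\delta t\,\kappa_l\Sigma^{-1}/(\gamma\kappa_t))V_t + 2\delta t\,\kappa_l\mathbb{I}_M/(\gamma\kappa_h\kappa_t)$, whose unique fixed point is $\kappa_h^{-1}\Sigma$. Writing $V_t = \kappa_h^{-1}\Sigma + E_t$ and iterating $E_{t+1} = (\mathbb{I}_M - 2\delta t\,\kappa_l\Sigma^{-1}/(\gamma(\kappa_l+\kappa_h)))E_t$ then produces the claimed closed form, with exponential convergence under a comparable small-$\delta t$ condition.

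The main obstacle is that this covariance recurrence is genuinely nonlinear: $B_t$ depends on $V_t^{-1}$, and the squared factor $(\mathbb{I}_M-B_t)^2$ generates a term $B_t^2 V_t$ that again involves $V_t^{-1}$ and is therefore not affine in $V_t$. Reconciling this with the clean constant-coefficient geometric solution is the crux. I would handle it by treating the scheme as a forward-Euler discretization and isolating the dominant balance: the offending $B_t^2 V_t$ term is $O(\delta t^2)$, so for $\delta t$ sufficiently small it is subdominant to the order-$\delta t$ drift driving the stated decay, and a Gronwall-type estimate on the accumulated remainder would show that the true trajectory stays within this contraction while remaining symmetric positive definite, so that $V_t^{-1}$ is well defined throughout. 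Making this control uniform in $t$, using that the leading-order dynamics are already contractive, is the step I expect to require the most care.
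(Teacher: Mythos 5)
Your proposal follows essentially the same route as the paper's proof: the same substitution of the Gaussian reweighting formulas \eqref{eq:m_v_reweight_Gauss} into the lemma's recurrence, the same exact cancellation that makes the mean recursion linear and independent of $V_t$, and the same reduction of the covariance recursion to the affine map $V_{t+1} = \bigl(\mathbb{I}_M - \tfrac{2\delta t\,\kappa_l}{\gamma\kappa_t}\Sigma^{-1}\bigr)V_t + \tfrac{2\delta t\,\kappa_l}{\gamma\kappa_h\kappa_t}\mathbb{I}_M$ with fixed point $\kappa_h^{-1}\Sigma$. The one point where you go beyond the paper is the obstacle you flag at the end: you are right that the covariance recurrence is not affine because of the $B_t^2 V_t$ term, and the paper handles this simply by writing ``by assuming that $\delta t$ is small'' and replacing $(\mathbb{I}_M-B_t)^2$ with $\mathbb{I}_M-2B_t$, after which it presents the geometric closed form for $V_t$ as an exact equality. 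So the closed form in the proposition is exact only for the linearized recurrence; for the true discrete dynamics it holds up to an $O(\delta t)$ accumulated error, and the Gronwall-type control you propose (together with verifying that $V_t$ stays in $S^M_{++}$ so that $V_t^{-1}$ is defined) is precisely what would be needed to make that step rigorous --- the paper does not carry it out. Your convergence threshold for the mean, $\delta t < 2\gamma\kappa_t\lambda_{\min}(\Sigma)/\kappa_l$, is the sharp spectral-radius condition, whereas the paper uses the more conservative $\delta t < \gamma\kappa_t\lambda_{\min}(\Sigma)/\kappa_l$; both suffice.
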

\begin{proof}
    For the first moment, we have the following recurrence relation
\begin{equation}
\begin{aligned}
    \mathbf m _{t+1}   &= \mathbf m_t  - \frac{\delta t \kappa_t^{-1} (V^{-1}_t +\kappa_l \Sigma^{-1}) }{\gamma}\left(\mathbf m_t - (V_t^{-1} + \kappa_l \Sigma ^{-1}) ^{-1} (\kappa_l\Sigma^{-1}\mathbf\mu +V^{-1}_t\mathbf m_t) \right)\\
    & = \left(\mathbb{I}_M -  \frac{\delta t \kappa_t^{-1} \kappa_l \Sigma^{-1}}{\gamma}\right)\mathbf m_t +\frac{\delta t\kappa_t^{-1}\kappa_l\Sigma^{-1}}{\gamma}    \mu.
\end{aligned}
\end{equation}
Therefore, we have that 
\begin{equation}
\begin{aligned}
    \mathbf m _{t+1} -   \mu &=   \left(\mathbb{I}_M -  \frac{\delta t \kappa_t^{-1} \kappa_l \Sigma^{-1}}{\gamma}\right)\mathbf (\mathbf m_t-\mu) \\
   & =   \left(\mathbb{I}_M -  \frac{\delta t \kappa_t^{-1} \kappa_l \Sigma^{-1}}{\gamma}\right)^{t+1}\mathbf (\mathbf m_0-\mu). 
\end{aligned}
\end{equation}
By \textcolor{black}{choosing} the timestep $\delta t\in\left(0,\frac{\gamma}{\kappa_t^{-1}\kappa_l e(\Sigma)^{-1}}\right)$, where $e(\Sigma)$ denote the smallest eigenvalue of $\Sigma$, the first moment will converge to $\mu$ exponentially fast. For the second moment, we have that 
\begin{equation}
        V_{t+1} =  \left(\mathbb{I}_M-\frac{\delta t \kappa_t^{-1} (V^{-1}_t +\kappa_l \Sigma^{-1})  }{\gamma}\right)^2 V_t + \frac{2\delta t}{\gamma\kappa_h}
\end{equation}
By assuming that $\delta t$ is small, we have the following approximation
\begin{equation}\begin{aligned}
   V_{t+1} &=  \left(\mathbb{I}_M-\frac{2\delta t \kappa_t^{-1} (V^{-1}_t +\kappa_l \Sigma^{-1})  }{\gamma}\right) V_t + \frac{2\delta t}{\gamma\kappa_h}\\
 & =   V_t  - \frac{2\delta t \kappa_l \Sigma^{-1} V_t}{\gamma(\kappa_l + \kappa_h)} + \frac{2\delta t\kappa_l}{\gamma\kappa_h(\kappa_l + \kappa_h)} \\
 & = V_t  - \frac{2\delta t \kappa_l}{\gamma(\kappa_l + \kappa_h)}\left(\Sigma^{-1}V_t -  \kappa_h^{-1} \mathbb I_M \right).
\end{aligned}
\end{equation}
Therefore, we have that 
\begin{equation}
    \begin{aligned}
        \Sigma^{-1}V_{t+1} -  \kappa_h^{-1} \mathbb I_M  &=  \left(\mathbb I_M -\frac{2\delta t \kappa_l\Sigma^{-1}}{\gamma(\kappa_l + \kappa_h)} \right)\left(\Sigma^{-1}V_t -  \kappa_h^{-1} \mathbb I_M \right)\\
        & = \left(\mathbb I_M -\frac{2\delta t \kappa_l\Sigma^{-1}}{\gamma(\kappa_l + \kappa_h)} \right)^{t+1}\left(\Sigma^{-1}V_0 -  \kappa_h^{-1} \mathbb I_M \right).
    \end{aligned}
\end{equation}
By choosing the timestep $\delta t\in\left(0,\frac{\gamma}{2 \kappa_t^{-1}\kappa_l e(\Sigma)^{-1}}\right)$, the second moment will converge to $\kappa_h^{-1} \Sigma$ exponentially fast.
\end{proof}
{\color{black}{
    
The convergence analysis in Proposition 2.3 and 2.5 assumes a local quadratic model 
$\mathcal L^-_{\mathcal N}(z)\approx \tfrac12(z-\mu)^\top\Sigma^{-1}(z-\mu)$, 
under which the reweighted law of a Gaussian remains Gaussian and yields closed-form updates of $(m_t,V_t)$.
In a general non-quadratic regime, the McKean dynamics~\eqref{equ:langevin} is driven only by the first two moments; hence it can be interpreted as a \emph{moment-closure} that evolves a Gaussian surrogate $q_t\approx\mathcal N(m_t,V_t)$.

If $\mathcal L^-_{\mathcal N}$ is $\ell$-strongly convex with bounded curvature on the region of interest,
i.e., $\ell I \preceq D^2\mathcal L^-_{\mathcal N}(z)\preceq u I$,the induced reweighting map admits a uniform covariance bounds of the form
$(V^{-1}+\kappa_l u I)^{-1}\preceq V_{\kappa_l}(m,V)\preceq (V^{-1}+\kappa_l \ell I)^{-1}$,
and the reweighted mean approaches the minimizer at the scale $O(\kappa_l^{-1/2})$; see, e.g.,  Lemmas~3.1--3.3 in~\cite{carrillo2022consensus}.
Consequently, in this convex regime, the method targets a unique Gaussian steady state
which can be viewed as a Laplace-type approximation of $q^*(z)$ near the max-residual \textcolor{black}{region}. For non-convex (e.g., multi-modal) residual landscapes, even though we cannot approximate $q^*(z)$ over the whole space at once, the sampling dynamics will explore at least one local minima in each iteration and solve the min-max problem iteratively to explore the full significant phase space.
}
}
\subsection{Practical stabilization modifications}
In this study, for the sake of computational efficiency, we further simplify $V$ by only considering the diagonal entries denoted as $\mathbf{v}$. Moreover, we utilize a moving average to ensure stable estimation
\begin{equation}
\begin{aligned}
\mathbf{m}_{t+1} &= \beta_1 \mathbf{m}_{t} + (1-\beta_1) \sum_{i=1}^{N_{w}}\mathbf{z}^i_t \hat{p}(\mathbf z^i_t),\\
\mathbf{v}_{t+1} &= \beta_2\mathbf{v}_{t} + (1-\beta_2)\kappa_t\sum_{i=1}^{N_{w}} (\mathbf{z}^i_t-\mathbf{m})\odot(\mathbf{z}^i_t-\mathbf{m}) \hat{p}(\mathbf z^i_t),
\end{aligned}
\end{equation}
where $\beta_1$ and $\beta_2$ are hyper-parameters. 
To maintain unbiased estimation, normalization is implemented as follows: $\mathbf{m} = \frac{\mathbf{m}_t}{1-\beta_1^t}$ and $\mathbf{v} = \frac{\mathbf{v}_t}{1-\beta_2^t}$.

Furthermore, we note that the kinetic processes of a molecular system are generally characterized by the local minima and saddle points of the FES. On the other hand, the regimes of high free energy are less relevant.  To accurately construct these thermodynamically accessible regimes, we modify the loss function as
\begin{equation}
\mathcal{L}_\mathcal{N}(\mathbf{z}) = \frac{ \left \vert\nabla_\mathbf{z} A_{\mathcal N}(\mathbf{z}) + \mathrm{F}(\mathbf{z})\right \vert^2}{\left \vert \mathrm{F}(\mathbf{z})\right \vert ^2+e},
\end{equation}
for all the biomolecule systems except the toy example of the 1D Rastrigin function. $e$ is a small value to regularize the denominator.

With the training samples obtained from the aforementioned maximization step, the NN surrogate is optimized using the Adam stochastic gradient descent method \cite{Kingma_Ba_Adam_2015} for the minimization step. The loss function of the updated $A_{\mathcal N}(\mathbf{z})$, in turn, navigates the consensus-based adaptive sampling for the updated maximization step. The min-max problem is solved iteratively to achieve comprehensive sampling of the full phase space.  A detailed algorithm is presented in Alg. \ref{alg:ces}.

\begin{algorithm}
\caption{Consensus-based adaptive sampling.}
\begin{algorithmic}
\REQUIRE{Initial sampling point $\mathbf{z}^i_0$, for $i=1,\ldots,N_{w}$}
\REQUIRE{Initial NN parameter $\theta_0$}
\REQUIRE{The number of training iterations $N_{train}$}
\REQUIRE The number of data collected $N_{data}$ in each training iteration
\STATE{$j \gets 0, t \gets 0 $}
\STATE $T \gets \lceil \frac{N_{data}}{N_{w}} \rceil $
\WHILE{$j < N_{train}$}
\WHILE{$t \leq T$}
\STATE calculate the mean force $\mathrm{F}^i_t$ at $\mathbf{z}^i_t$
\STATE calculate the predicted force $\mathcal{F}_\theta(\mathbf{z}^i_t) = \nabla_\mathbf{z} A_{\mathcal{N}}(\mathbf{z}^i_t; \theta_j)$
\STATE $L^i \gets \mathcal{L}_\mathcal{N}(\mathbf{z}^i_t)$
\STATE $w^i \gets \frac{\exp{(\kappa_l L^i})}{\sum_i \exp{(\kappa_l L^i})}$
\STATE $\mathbf{m}_{t+1} \gets \beta_1 \mathbf{m}_{t} + (1-\beta_1) \sum_i\mathbf{z}^i_t w^i$ 
\STATE $\mathbf{v}_{t+1} \gets \beta_2 \mathbf{v}_{t}+ (\kappa_l+\kappa_h)(1-\beta_2)\sum_i(\mathbf{z}^i_t-\mathbf{m}_t)^2 w^i$
\STATE $\mathbf{m} \gets \frac{\mathbf{m}_{t+1}}{1-\beta_1^t}$
\STATE $\mathbf{v} \gets \frac{\mathbf{v}_{t+1}}{1-\beta_2^t}$
\STATE $\mathbf{z}^i_{t+1} \gets \mathbf{z}^i_t - \frac{\delta t}{\gamma} (\mathbf{z}_t^i-\mathbf{m})\odiv\mathbf{v}+ \sqrt{\frac{2\delta t}{\gamma\kappa_h}}\mathbf{\eta}^i_t, \eta^i_t \sim \mathcal{N}(0,1)$
\STATE $t \gets t+1 $
\ENDWHILE
\STATE Save the training dataset $\mathcal{D}_j=\{\mathbf{z}^i_t,\mathrm{F}^i_t\}_{t=0}^T$
\STATE Optimize $\theta_{j+1}$ using the generated training set $\mathcal{D}_l$ for $l=0,\ldots,j$.
\STATE $j \gets j+1 $
\ENDWHILE
\end{algorithmic}
\label{alg:ces}
\end{algorithm}

\subsection{Related methods}

Before we present the numerical results, we further discuss the differences between the present CAS method and the existing approaches. Since the standard sampling methods such as MCMC \cite{robert2004monte}, Langevin dynamics \cite{Gareth_Richard_Bernoulli_1996}, and generative model \cite{dinh2016density, tang2022adaptive} are not applicable, we focus on two major approaches based on the enhanced sampling such as the VES method \cite{bonati2019neural}  as well as the active learning such as the RiD method \cite{zhang2018reinforced}.

The VES method provides an efficient approach to establish the re-weighted sampling of the phase space. However, the surrogate construction of the FES $A(\mathbf z)$ further relies on the estimation of the PDF $\rho(\mathbf z)$ from the obtained reweighted samples, which could become computationally expensive for high-dimensional problems. In contrast, the present method circumvents the PDF estimation and directly constructs $A(\mathbf z)$ through the adaptive sampling of the mean force $F(\mb z) = -\nabla A(\mathbf z)$.

Furthermore, the present method differs from the RiD method in the following two aspects: (\Rmnum{1}) The present method imposes the sampling adaptivity by explicitly using the residual error $\mathcal{L_N}(\mathbf z)$ while the RiD method does not directly account for the approximation error. Instead, the RiD method trains a replica of NNs on the same sample set and uses the standard deviation of multiple NN predictions as an indirect measure of the $\mathcal{L_N}(\mathbf z)$. As a result, the sampling error of the mean force $F(\mathbf z)$ may lead to a consistent biased prediction among multiple NNs with a small standard deviation. (\Rmnum{2}) The RiD method drives the sampling dynamics with a biased MD potential $U(\mb r)$ similar to Eq. \eqref{eq:U_bias}. Accordingly, the time step needs to be sufficiently small due to the stiffness of $U(\mb r)$. In contrast, the dynamics of the random walkers for the present method is governed by a mean-field quadratic potential $G(\mb z)$ in Eq. \eqref{equ:langevin}. In particular, $G(\mb z)$ is decoupled from the MD potential $U(\mb r)$. This enables us to choose a much larger time step irrespective of the roughness of $U(\mb r)$ and achieve a more efficient exploration of the phase space.  
Below we compare the performance of the present method with these two methods in more detail. 

%

\section{Numerical Results}
\label{sec:numerical example}
{\color{black}{
In this section, we illustrate the CAS method with several numerical examples. We refer to Appendix \ref{sec:parameter_sen}, \ref{sec:explore_CV} and \ref{sec:additional_results} for the parameter choices and additional numerical results. 
}}

\subsection{One-dimensional Rastrigin function}
To illustrate the essential idea of the present method, we start with the one-dimensional Rastrigin function:

\begin{equation}
f(z) = z^2-\cos(2\pi z ), \quad z\in [-3,3].
\end{equation}

Instead of a neural network, we simply use a piecewise polynomial function $f_\theta (z)$ for this 1D problem. Accordingly, the residual is directly defined as $\vert f-f_{\theta_i}\vert$, where $\theta_i$ represents the fitting parameters for the $i\mhyphen$th iteration.
Initially, we set $f_{\theta_0}(z) \equiv 8$ with consistent boundary condition. We use the proposed sampling method with 10 walkers to estimate the first and second moments, yielding $m_1=1 \times 10 ^{-4}$ and $V_1=0.022$ and therefore the first and second derivative $f'(m_1) = 0$ and $f''(m_1) = V_1^{-1} = 40.97$. The obtained $m_1$ is very close to the true values of the max-residual point $z_1^{\ast} = 0$ and the second derivative $f''(z_1^{\ast})=41.47$. Accordingly, we add new data points near $x=1 \times 10 ^{-4}$ into the training set, which enables us to construct an updated approximation $f_{\theta_1}(z)$.  
Similarly, with the approximation $f_{\theta_{i}}(z)$, we conduct the sampling process (i.e., the maximization step) and include the obtained samples near $m_{i+1}$ into the training set, yielding an updated approximation $f_{\theta_{i+1}} (z)$. As shown in Fig. \ref{fig:1D}, for each iteration, the sampling step can accurately pinpoint the max-residual region. 
Furthermore, as shown in Appendix \ref{app:1D}, the second moment $V_i$ yields a consistent estimation of the second derivative near the max-residual region. The underlying function $f(z)$ can be accurately constructed after $12$ iterations. 

\begin{figure}
\centering
\includegraphics[width=0.8\textwidth]{./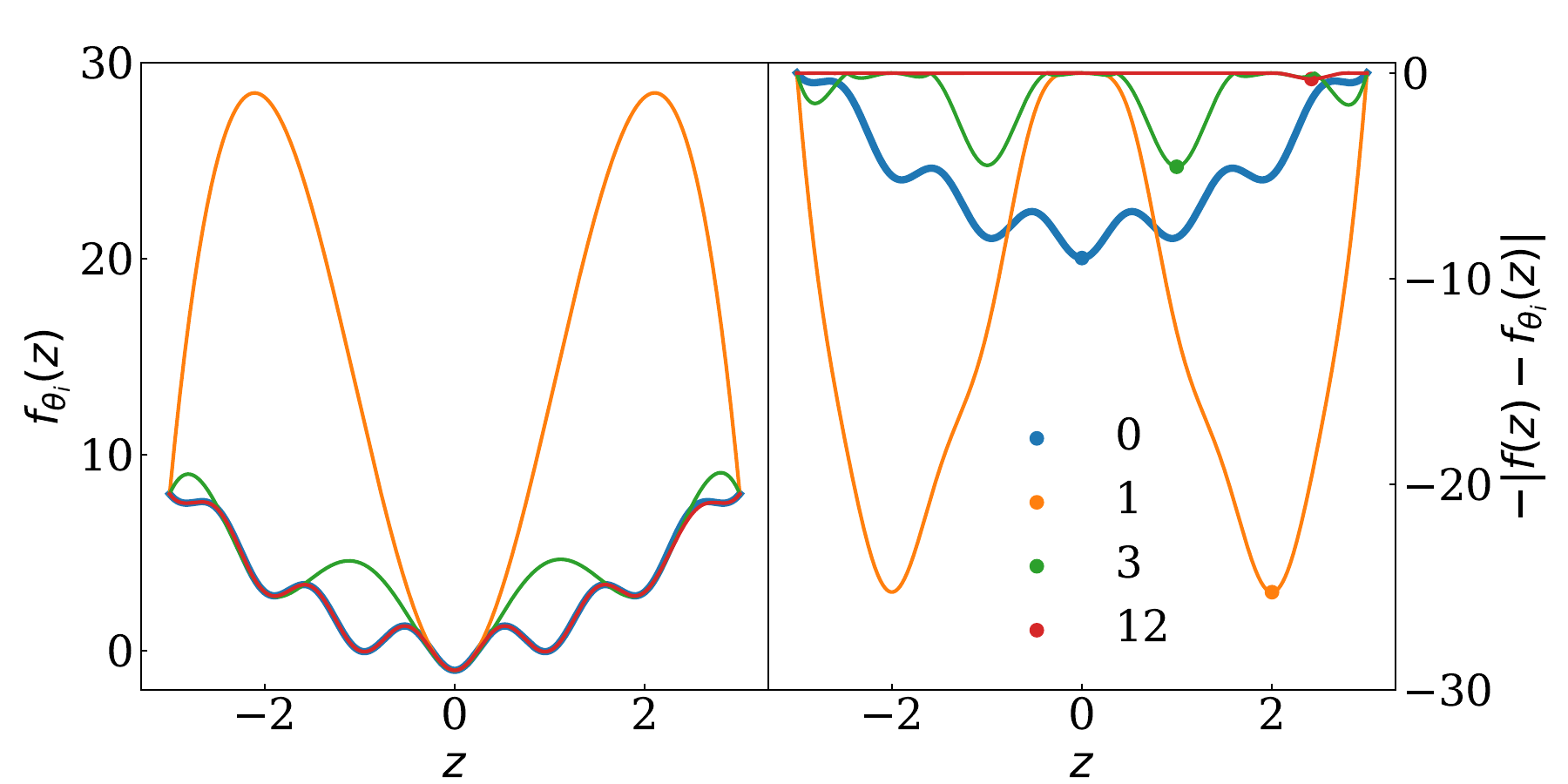}
\caption{Adaptive sampling and construction of the 1D Rastrigin function. Left: The reference function $f(z)$ and the constructed approximations $f_{\theta_i}(z)$ obtained at different iteration steps. The relative $l_2$ error is less than $6\times10^{-3}$ after $12$ iterations. Right: The residual function $|f(z)-f_{\theta_i}(z)|$. The symbols represent the locations identified by each adaptive sampling (i.e., the maximization) step where new points will be added for the next training (i.e., the minimization) step.}
\label{fig:1D}
\end{figure}

\subsection{Two-dimensional FES}

We use the alanine dipeptide (Ace-Ala-Nme), referred to as Ala2,  as a benchmark problem. The molecule is solvated in $383$ TIP3P water molecules similar to Ref. \cite{zhang2018reinforced}. The full MD system is simulated in a canonical ensemble under temperature $300$K using the Amber99-SB force field \cite{hornak2006comparison} with a time step $2$ fs. 
We refer to Appendix \ref{app:MD_setup} for the simulation details. 

We choose the CVs as the two torsion angles $\phi$ (C, N, C$_\alpha$, C) and $\psi$ (N, C$_\alpha$, C, N). 
For comparison, we also construct the FESs using the  VES \cite{bonati2019neural} and RiD  \cite{zhang2018reinforced} method with the same setup parameters presented therein. 
For the present CAS method, we use $10$ walkers to explore the configuration space. The initial points of the walkers at the $k\mhyphen$th iteration are chosen to be the final states of the previous iteration. In the sampling stage, the inverse of the low and high temperatures are set to be $\kappa_l=10$ and $\kappa_h = 1$, respectively. For each sample point, restrained MD is conducted with $\kappa=500$  for $5000$ steps to compute the average force. The time step for the dynamics of the random walkers \eqref{equ:langevin} is set to be $0.1$.  
%
Fig. \ref{fig:ala2_2D} shows the FESs constructed by the three different methods and the reference obtained by the metadynamics \cite{laio2002escaping} using a long simulation time. As shown in Tab. \ref{tab:computational cost}, the present CAS method yields a smaller approximation error and meanwhile requires lower computational cost than the VES and the RiD method.

\begin{table}[h]
    \caption{The accuracy of the constructed 2D FES (the Ala2 molecule) and computational time (in hours, the same below) for the VES, RiD, and CAS methods. The \( l_2 \) and \( l_\infty \) error are computed up to \( 40 \) KJ/mol. The reference solution is constructed by the metadynamics. The simulation time of the CAS method is multiplied by $10$ since 10 random walkers are used.}
    \centering
    \begin{tabular}{|c|c|c|c|c|}
\hline
 \multirow{2}{*}{Method} &\multicolumn{2}{c|}{Accuracy} &   \multicolumn{2}{c|}{Time} \\
 \cline{2-5}
 ~ & $l_2$ error & $l_{\infty}$ error &Simulation & Train \\
 \hline
 VES  & $5.39$  & $21.03$ & \multicolumn{2}{c|}{$47.5$}   \\
\hline
 RiD & $3.15$ & $11.04$ & $17.98$ & $0.22$ (GPU)  \\
\hline
 \multirow{2}{*}{CAS}  & \multirow{2}{*}{$1.88$} & \multirow{2}{*}{$10.68$} & \multirow{2}{*}{$0.23 \times 10$} & $0.18$ (CPU) \\
 \cline{5-5}
  ~ & ~ & ~ &  ~   & $0.13$ (GPU)  \\
\hline
    \end{tabular}
    \label{tab:computational cost}
\end{table}

\begin{figure}
    \centering
\includegraphics[width=8.7cm]{./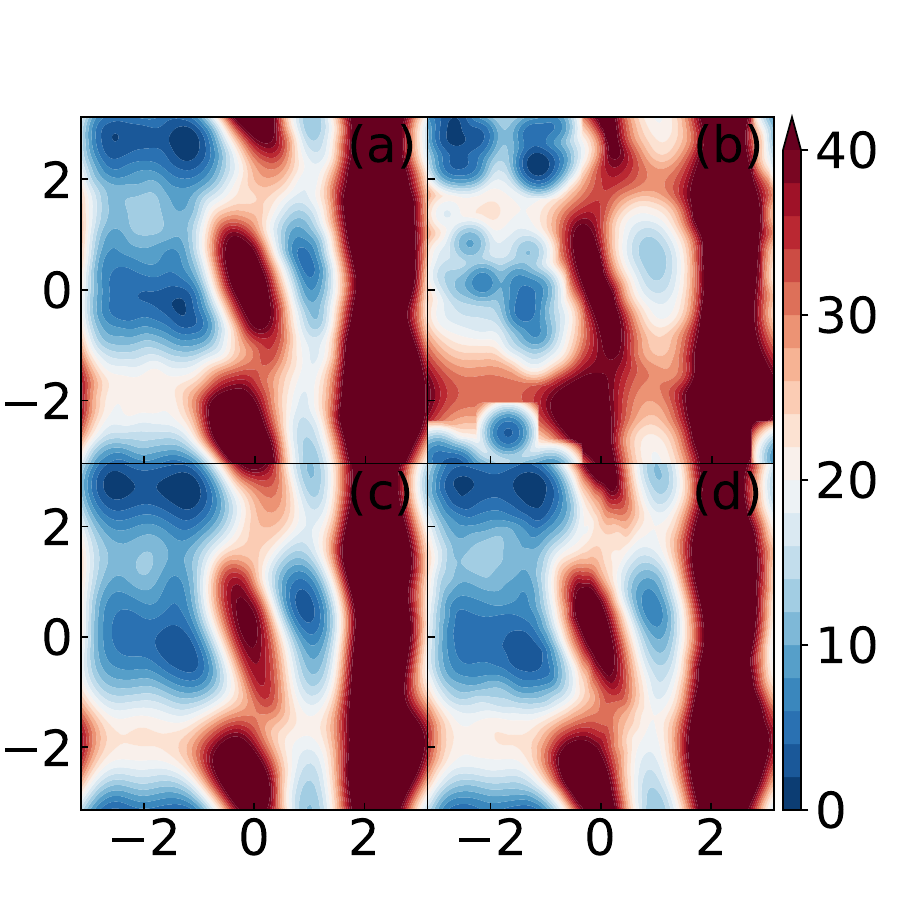}
    \caption{The 2D FES for the Ala2 molecule on the $\phi - \psi$ plane constructed by (a) Metadynamics (reference) (b) VES (c) RiD (d) CAS (the present method). The accuracy and the computational cost are shown in Table \ref{tab:computational cost}.}
    \label{fig:ala2_2D}
\end{figure}

%

\subsection{Three-dimensional FES}
Next, we consider a s-(1)-phenylethyl (s1pe) peptoid in an aqueous environment similar to Refs. \cite{weiser2019cgenff,wang2020reinforcement}. 
The full system consists of one biomolecule and 546 water molecules in a $(2.9 {\rm nm})^3$ dodecahedron box. The CHARMM general force field (CGenFF)  \cite{weiser2019cgenff} is used for the biomolecule and the TIP3P model \cite{jorgensen1983comparison} is used for the water molecules; see Appendix \ref{app:MD_setup} for details. 

The CVs are the three torsion angles $\omega$ spanned by atoms (C$_\alpha$, C, N, C$_\alpha$),  $\phi$, and $\psi$, where the latter two are the same as the Ala2 molecule. The FES is constructed by both the RiD and CAS methods. The setup of the RiD method is the same as  \cite{wang2020reinforcement}. For the CAS method, we use 20 walkers and set 
the inverse of the low and high temperatures to be $\kappa_l=10$ and $\kappa_h = 2$, respectively. The initial points of these walkers at each iteration are chosen in the same method as the two-dimensional problem.
The timestep $\delta t$ for the dynamics of the random walkers \eqref{equ:langevin} is set to be $0.1$. 
For each sample point, restrained MD with $\kappa=500$ is conducted for $10000$ steps to compute the mean force.

For visualization, we project the constructed FES onto a two-dimensional plane and fix the third variable.  Fig. \ref{fig:s1pe} shows the projected FES on the $\omega-\phi$ and $\phi-\psi$ plane obtained from the CAS and the RiD methods. For each projection, the reference is constructed as a 2D FES using the metadynamics \cite{laio2002escaping}.
Similar to the previous 2D case, the present CAS method yields higher accuracy with lower computational cost. 

\begin{figure}
    \centering
    \includegraphics[width=0.8\textwidth]{./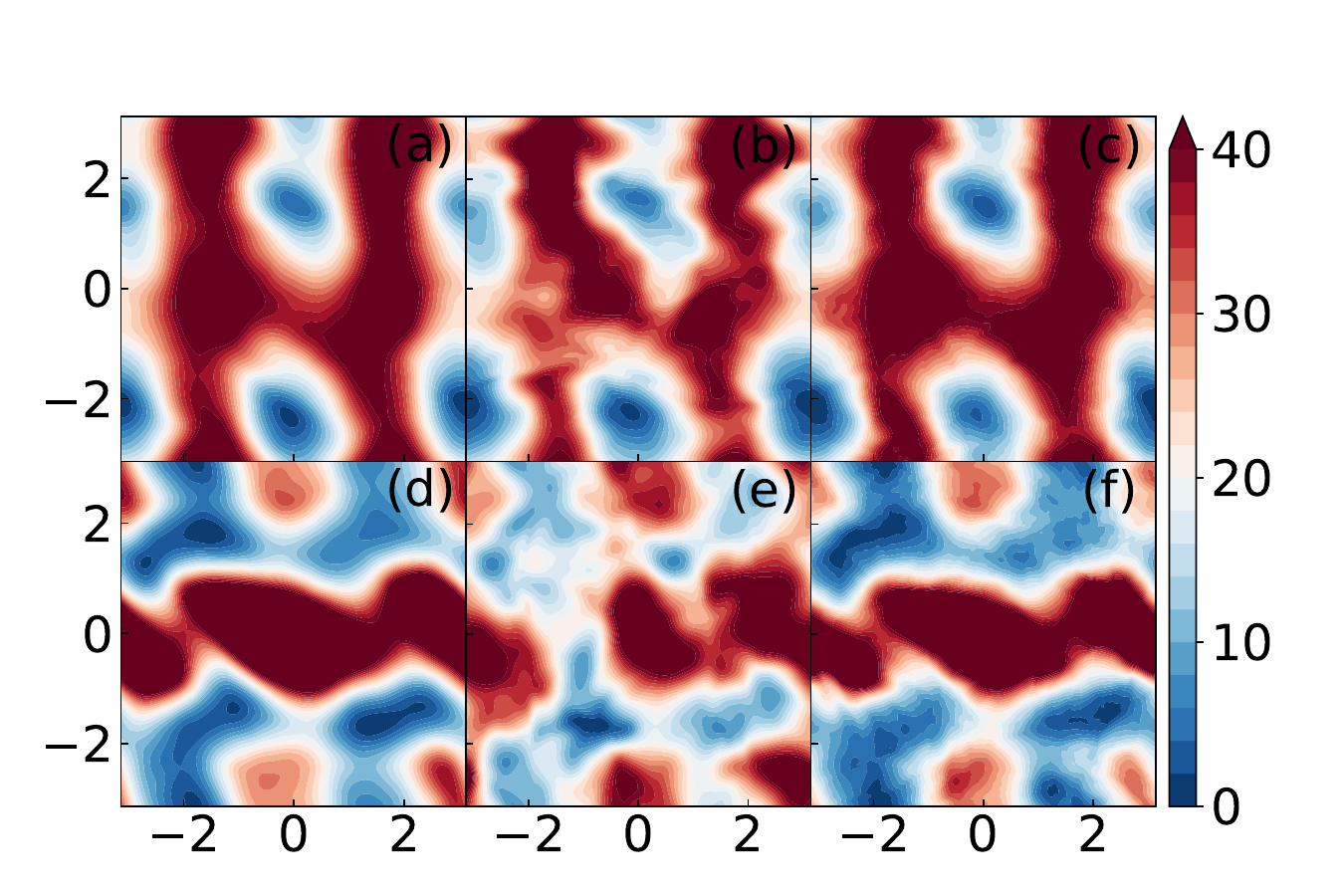}
    \caption{The 3D FES for molecule s1pe projected on the \(\omega - \phi\) plane by fixing \(\psi=1.5\) (first row) and the \(\phi -\psi\) planes by fixing \(\omega=1.5\) (second-row). (a-d) The 2D FES constructed using metadynamics with the third variable restrained (reference); (b-e) 2D projection of the 3D FES constructed by the RiD method; (c-f) 2D projection of the 3D FES constructed by the present CAS method. 
    }
    \label{fig:s1pe}
\end{figure}

\subsection{Nine-dimensional FES}
\begin{figure}[t]
    \centering
    \includegraphics[width=0.8\textwidth]{./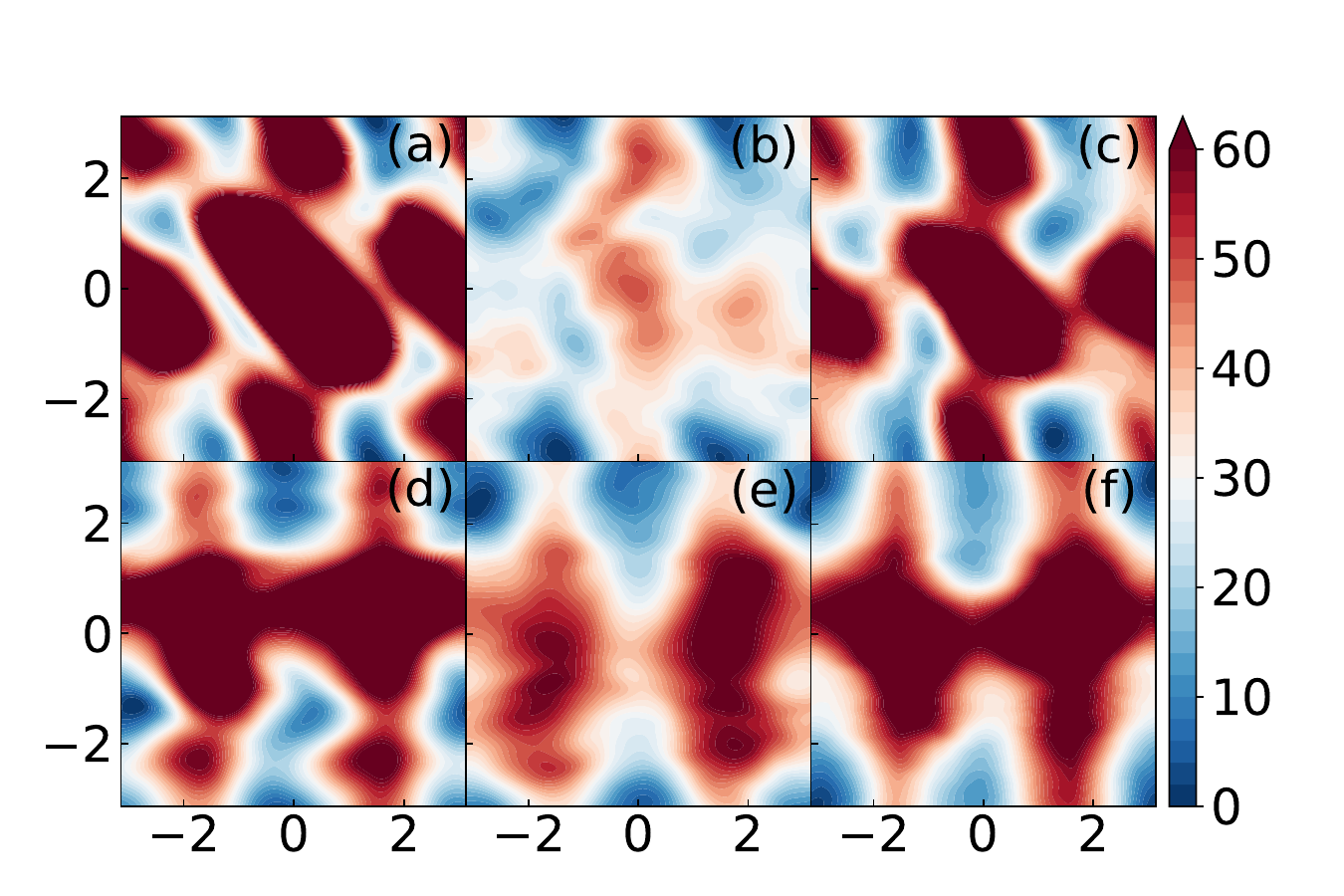}
    \caption{The 9D FES for molecule (s1pe)$_3$ projected on the $\omega_1-\phi_1$ (first row) and $\omega_1-\psi_1$ plane (second row). (a-d) The 2D FES constructed using metadynamics with the remaining variable restrained (reference solution); (b-e) 2D projection of the 9D FES constructed by the RiD method; (c-f) 2D projection of the 9D FES constructed by the present CAS method. } 
    \label{fig:s1pe3}
\end{figure}

Furthermore, we consider a more complex molecule, the peptoid trimer (s1pe)$_3$, solvated in a (4.2 nm)$^3$ dodecahedron box with 1622 TIP3P water molecules. 
%
%
The chosen CVs consist of the $9$ torsion angles $\omega,\phi,\psi$ that are defined in s1pe case, associated with the different C$_\alpha$ atoms and denoted as $\omega_1,\phi_1,\psi_1,\omega_2,\phi_2,\psi_2,\omega_3,\phi_3,\psi_3$. We use 64 walkers for this case and the initial conditions of the walkers at each iteration are chosen with the same method as the previous cases. The inverse of the low and the high temperature are set to be $\kappa_l = 100$ and $\kappa_h=2$, respectively. The timestep $\delta t$ of the dynamics of the random walkers is set to be $0.1$. The FES is constructed by the CAS method using 28 iterations of sampling and training, which requires $225.53\times64 = 14434.35$ CPU hours for simulation and $6.06$ GPU hours for training. For comparison, the RiD method uses $17900$ CPU hours for simulation and $15.44$ GPU hours for training. 
Similar to the above 3D problem, the constructed 9D FES is projected onto various 2D planes with the remaining variables fixed. For each 2D projection, the reference is constructed as a 2D FES using metadynamics. 
Fig. \ref{fig:s1pe3} shows the projection on the $\omega_1-\phi_1$ and $\omega_1-\psi_1$ plane (see Appendix \ref{app:9D} for other 2D projections). Compared with the Rid method, the present CAS method yields higher accuracy and lower computational cost.
%
%

\subsection{Thirty-dimensional FES}
Finally, we consider the polyalanine-15 (Ace-(Ala)$_{15}$-Nme), denoted as Ala16, solvated in $2258$ water molecules.  
The chosen CVs consist of torsion angles $\phi$ and $\psi$, defined in Ala2 case, associated with the different C$_\alpha$s, denoted as $\{\phi_i,\psi_i\}_{i=1}^{15}$. We use 64 walkers for this case and the initial value of the walkers at each iteration is given by the biased simulation before. The inverse of the low and high temperatures are set to be $\kappa_l=20$ and $\kappa_h = 5$, respectively. The timestep $\alpha$ for dynamics of the random walkers Eq. \eqref{equ:langevin} is set to be $0.1$. The FES is constructed using $100$ iterations of sampling and training.  Similar to the previous case, the obtained FES is plotted on a two-dimensional plane while the remaining variables are fixed and the reference is constructed as a 2D FES using the metadynamics. Fig. \ref{fig:ala16_2D} shows the projection on the $\phi_1-\psi_1$, $\phi_2-\phi_3$ and $\psi_2-\psi_3$ plane. For all the cases, the projection of the 30-dimensional FES shows good agreement with the 2-dimensional reference solution. We have also examined the projection on other planes; the prediction shows good agreement with the reference solution as well. 


\begin{figure}
\centering
\includegraphics[width=0.8\textwidth]{./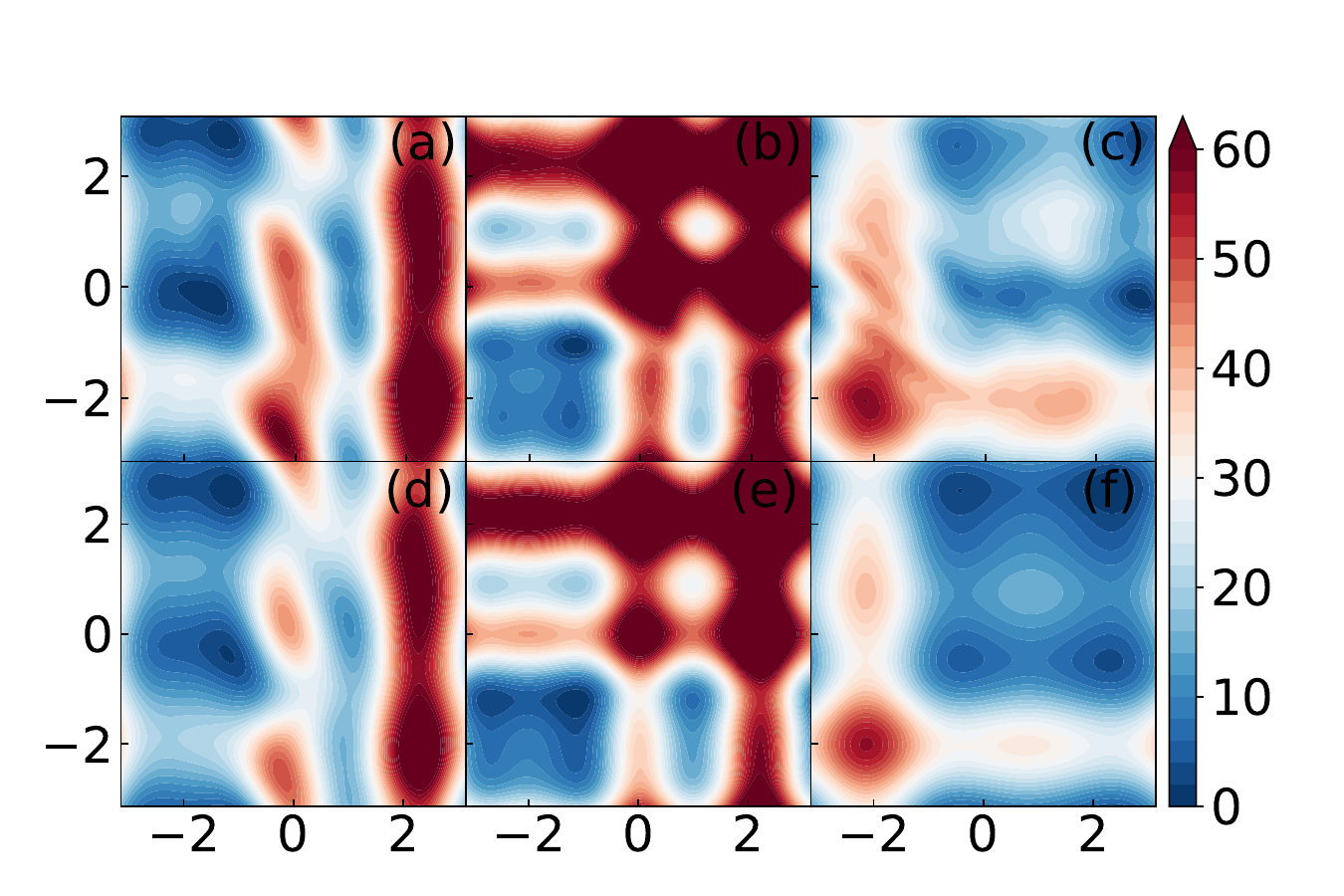}
\caption{The 30D FES for molecule Ala16 projected on various 2D planes. (a-d) \(\phi_1 - \psi_1\) (b-e) \(\phi_2 - \phi_3\) (c-f) \(\psi_2 - \psi_3\). (a-b-c) 2D FES constructed by metadynamics (reference); (d-e-f) 2D projection of the 30D FES constructed by the present CAS method. 
}
\label{fig:ala16_2D}
\end{figure}

\section{Conclusions}
We developed a consensus-based sampling approach for adaptive surrogate construction in systems with complex energy landscapes. 
The method reformulates the construction task as a minimax optimization problem by simultaneously optimizing the target function approximation and the training set. 
A unique feature of the method is its ability to unify the dynamical exploration of high-dimensional phase space in the presence of prevalent energy barriers with efficient sampling adaptive to the posterior residual error. 
As a motivating example, we considered the long-standing problem of constructing high-dimensional FESs for MD systems. Compared with existing approaches \cite{stoltz2010free} focusing on various enhanced sampling strategies to overcome energy barriers, the present method further enables the residual-based adaptive sampling that remains an open problem in FES construction. On the other hand, unlike the common adaptive sampling strategies based on the generative model \cite{tang2023adversarial} and
important sampling \cite{Gao_Yan_SIAM_2023} used for solving high-dimensional PDEs, the present method achieves efficient sampling of the max-residual region in the form of an interacting particle system that does not rely on the free query of the arbitrary point in the phase space and the gradient computation of the residual function. Given the fact the sampling only relies on the first and second-moment estimation, the method could be particularly efficient for high-dimensional problems. While the numerical results of biomolecular systems have demonstrated the effectiveness for the high-dimensional FES construction, the present method provides a framework for unifying the phase space exploration and residual-based adaptive sampling, and can be broadly applied to multiscale systems involving constrained sampling dynamics and surrogate modeling, such as rare event sampling  \cite{e2010transition}, stochastic reduced dynamics \cite{Lyu_Lei_PRL_2023, Ge_Lei_GLE_PRL_2024} and uncertainty quantification in high-dimensional constrained phase space \cite{lelievre2016partial}.


\clearpage

\appendix

\section{Simulation setup}
\label{app:MD_setup}
All the MD simulations are
performed using the package GROMACS 2019.2 \cite{lindahl_2019_2636382} and open-source, community-developed PLUMED library \cite{tribello2014plumed}.  The simulation is carried out on Intel(R) Xeon(R) Platinum 8260 CPU.

\subsection{ala2}
The Ace-Ala-Nme (ala2) molecule is modeled by the Amber99SB force field \cite{hornak2006comparison}. The molecule is solvated in an aqueous environment with  383 TIP3P water molecules. Periodic boundary conditions are imposed along each direction. The cut-off radius of the van der Waals interaction is 0.9 nm.
The long-range Coulomb interaction is treated with the smooth particle mesh Ewald method with a real space cutoff of 0.9 nm and a reciprocal space grid spacing of 0.12 nm. The system is integrated with the leap-frog scheme at time step 2 fs. The temperature is set to be 300 K by a velocity-rescale thermostat \cite{bussi2007canonical} with a relaxation time of 0.2 ps. The Parrinello-Rahman barostat \cite{parrinello1981polymorphic} with a relaxation time scale of 1.5 ps and a compressibility
of $4.5 \times 10^{-5}$ bar$^{-1}$
is coupled to the system to control the
pressure to 1 bar. The hydrogen atom is
constrained by the LINCS algorithm \cite{hess1997lincs} and the H–O bond and H–O–H angle of water molecules are constrained by the SETTLE algorithm \cite{miyamoto1992settle}.
\subsection{s1pe}
The s-(1)-phenylethyl (s1pe) molecule is modeled by the CHARMM general force field (CGenFF) \cite{weiser2019cgenff}.
The molecule is dissolved in 546 TIP3P water molecules in a (2.69 nm)$^3$ dodecahedron box. The cut-off radius of the van der Waals interaction is 1 nm. Other setups are similar to the ala2 molecule. 

\subsection{(s1pe)$_3$}
The (s1pe)$_3$ molecule is modeled by the CHARMM general force field (CGenFF)  \cite{weiser2019cgenff}.
The molecule is solvated in a (4.2 nm)$^3$ dodecahedron
box with 1622 TIP3P water molecules. Other setups are similar to the ala2 molecule.

\subsection{ala16}
The Ace-(Ala)$_{15}$-Nme (ala16) molecule is modeled by the Amber99SB force field \cite{hornak2006comparison}. The molecule is solvated in a (4.62 nm)$^3$  dodecahedron box with 2258 TIP3P water molecules. Other setups are similar to the ala2 molecule.

\section{Training}
The training data is collected by restrained molecular dynamics.
The initial $10\%$ steps of the restrained dynamics are used as equilibrium and the rest $90\%$ steps are used to calculate the mean force.  
The FES are parameterized as a fully connected neural network. The depth and width of the NN are shown in Table \ref{tab:NN}.
The NNs are trained by Adam for $100000$ steps with a learning rate $1\times10^{-3}$. For each training step, $5000$ sampling points are randomly selected from the data set. All the training processes are conducted using an Nvidia GPU V100 with 32GB memory.
\begin{table}[h]
    \caption{The depth and width of the NN used to parameterize the FES of different molecules.}
    \centering
    \begin{tabular}{|c|c|c|}
    \hline
         molecule &  depth  & width \\
    \hline
         ala2 &  3 & 48 \\
         \hline
         s1pe &  4 & 64 \\
         \hline
         (s1pe)$_{3}$ & 4 & 512 \\
         \hline
         ala16 & 4 & 640 \\ 
    \hline
    \end{tabular}
    \label{tab:NN}
\end{table}

\section{Parameter choice and senstivity study}
\label{sec:parameter_sen}
{
\color{black}{
The present CAS method relies on the choice of the number of walker $N_w$, the inverse of the low temperature $\kappa_l$, and the high temperature $\kappa_h$. Specifically, we choose $N_w$ to achieve a balance between the approximation of the first and second moments of $q^{\ast}$ and the computational cost of solving the sampling dynamics. $\kappa_l$ is chosen to achieve a balance between the Laplace approximation of the max-residual region and the Lipschitz condition of the weighted density function $p(z)$. $\kappa_h$ is chosen to achieve a balance between the exploration of the full phase space and the adaptive sampling of the max-residual region. Numerical results show that the present method is robust with the choice of $N_w$ with $N_w \ge 10$ and a broad range of temperatures $\kappa_l$ and $\kappa_h$.

To examine the robustness of the present method, we report a senstivity study.
Specifically, we fix all parameters to the baseline setting in the main text and vary a single parameter. Each parameter setting is repeated with 5 random seeds. Fig.~\ref{fig:ala2_sens} shows the mean of the $L^2$ error of the 2D FES for the ala2 molecule system, with shaded bands indicating the range of the standard deviation with  $N_w\in\{5,10,20\}$, $\kappa_h\in\{2,0.2,0.02\}$ and $\kappa_l\in\{100,10,1\}$. 
We observe that the method exhibits stable convergence with $N_w \ge 10$ and a broad range of $\kappa_l$ and $\kappa_h$. Similarly, Fig. \ref{fig:chi1_sens} shows the $L^2$ error of the 3D FES for the s1pe molecule, where the method remains robust with $N_w \in \{10,20, 40\}$.
}
}


\section{Exploration of the phase space}
\label{sec:explore_CV}
{
\color{black}{
To illustrate the dynamical exploration of the phase space, Fig. \ref{fig:sampling} shows the evolution of the random walkers for the ala2 molecule system. We note that the sampling dynamics of the random walkers \eqref{equ:langevin} is driven by a non-local potential $G(\mb z)$. In particular, $G(\mb z)$ takes a \textcolor{black}{quadratic} form determined by the local residual $\mathcal{L}^{-}(\mb z)$ and does not explicitly depend on the underlying MD potential $U(\mb r)$. This feature differs from most existing sampling methods where the sampling dynamics is coupled with the stiff potential $U(\mb r)$, and therefore, enables the random walkers to efficiently explore the 
CV space and achieve the adaptive sampling of the FES.
}}

\begin{figure}[t]
    \centering
    \includegraphics[width=0.3\linewidth]{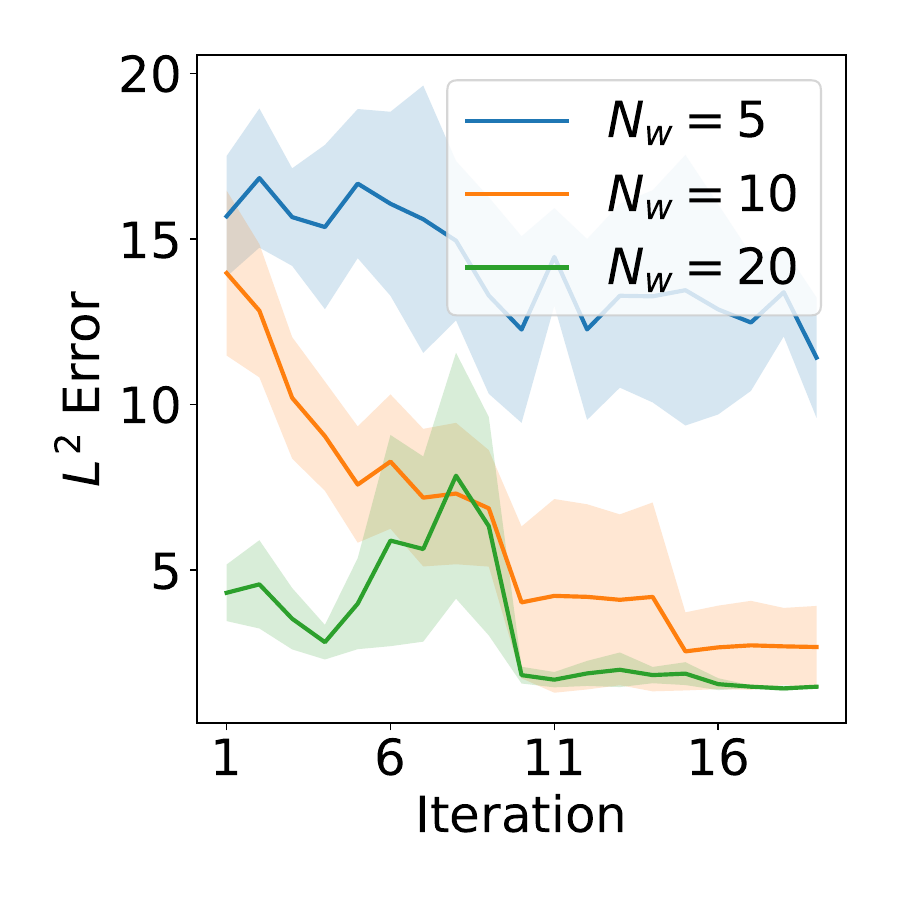}
    \includegraphics[width=0.3\linewidth]{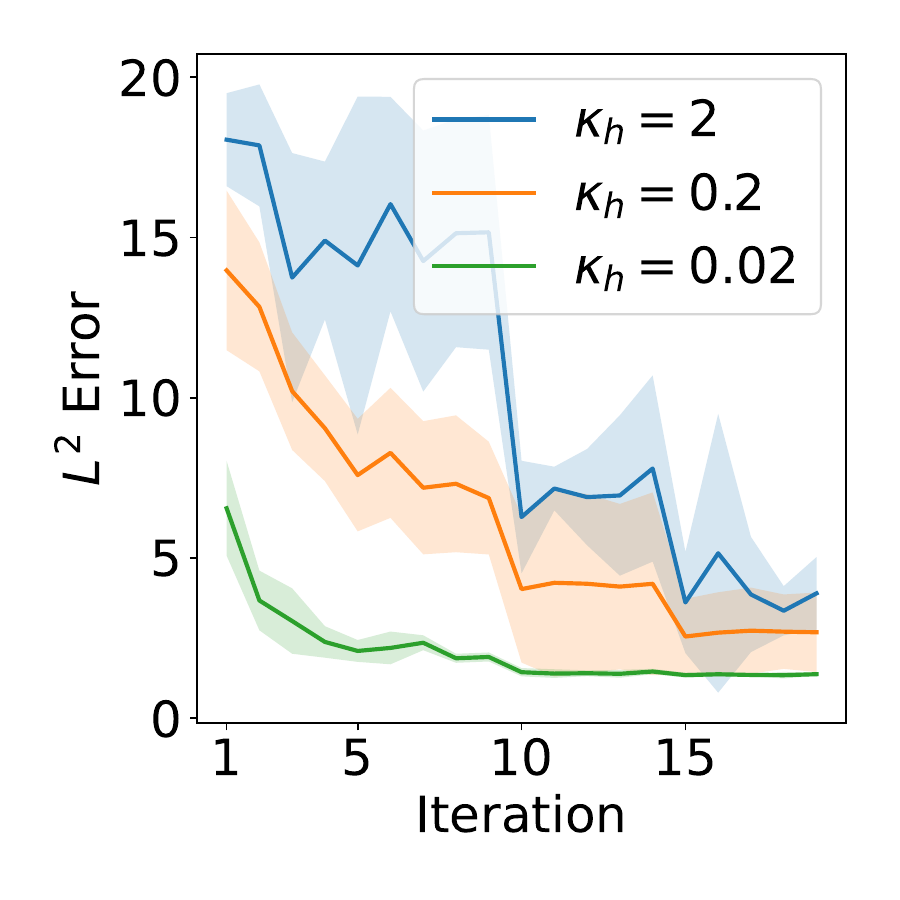}
    \includegraphics[width=0.3\linewidth]{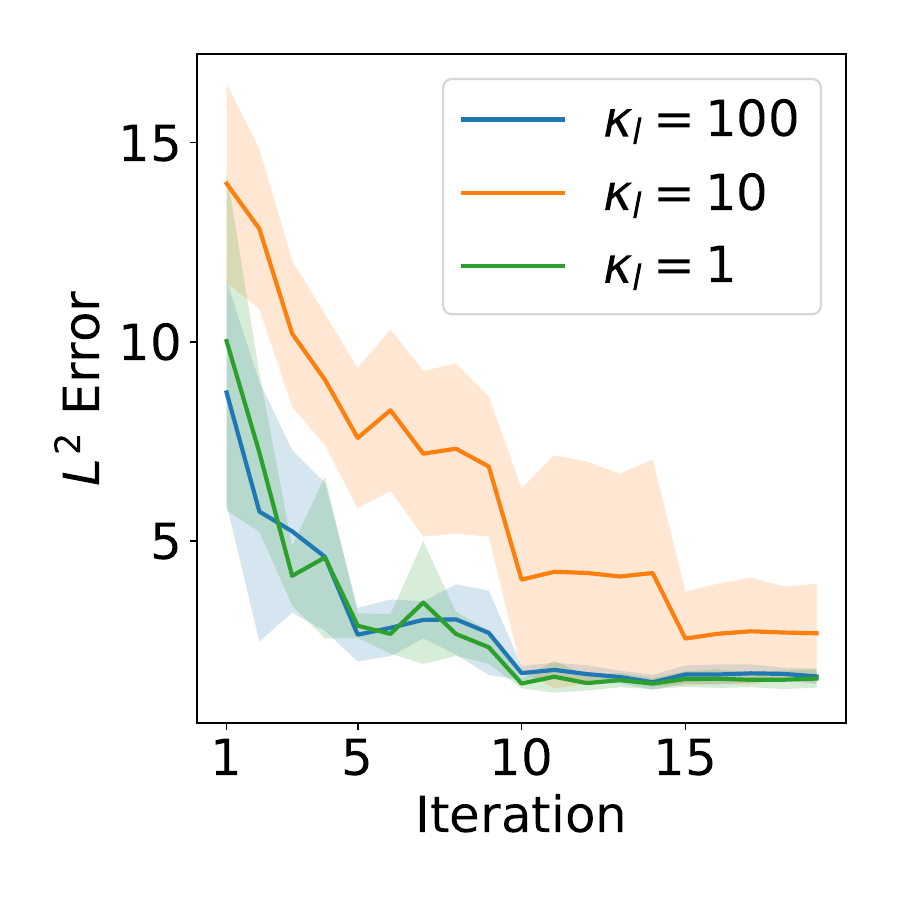}
    \caption{{\color{black}{Sensitivity study on the $L^2$ error of the 2D FES for the ala molecule with respect to the number of walkers $N_w$, the inverse temperature $\kappa_h$, and $\kappa_l$. Solid lines: mean over 5 random seeds; shaded region: one standard deviation.}}}
    \label{fig:ala2_sens}
\end{figure}
\begin{figure}
    \centering
    \includegraphics[width=0.45\linewidth]{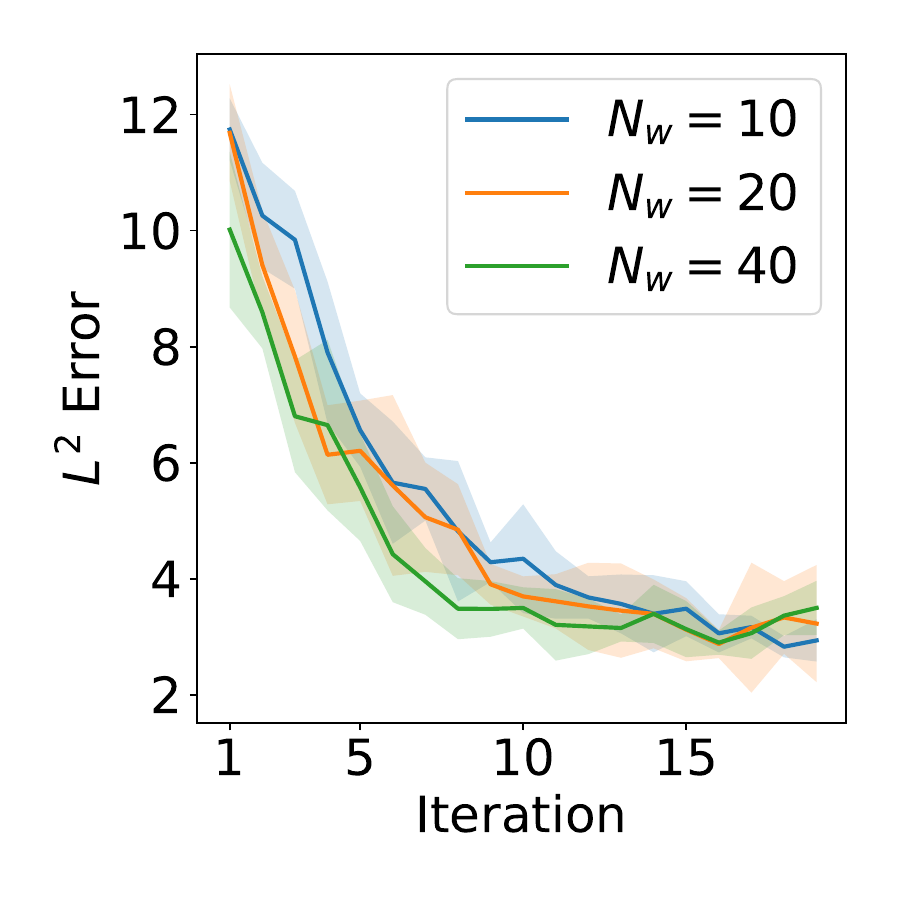}
    \caption{{\color{black}{Sensitivity study on the $L^2$ error of the 3D FES for the s1pe molecule with respect to the number of walkers $N_w$. Solid lines: mean over 5 different random seeds; shaded region: one standard deviation.}}}
    \label{fig:chi1_sens}
\end{figure}

\begin{figure}[t]
    \centering
    \includegraphics[width=0.45\linewidth]{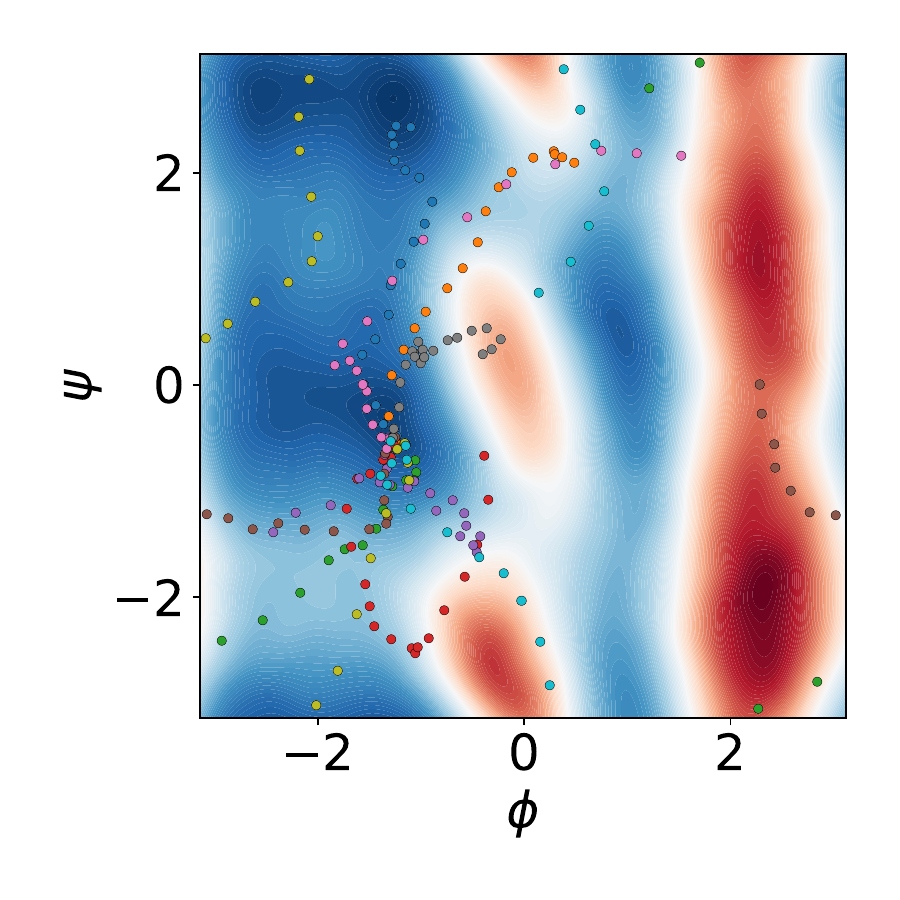}
    \caption{{\color{black}{Residual-guided exploration of the CV space of the ala2 molecule.  
    The points show the walker samples generated by the CAS method on the $(\phi,\psi)$ plane, overlaid on the reference FES.}} 
    }
    \label{fig:sampling}
\end{figure}

\section{Additional numerical results}
\label{sec:additional_results}
\subsection{1D Rastrigin function}
\label{app:1D}
Tab. \ref{tab:1D_appendix} presents the numerical estimations of the first and second moments along with the construction of the 1D Rastrigin function. As shown in Prop. \ref{prop:invariant_dist}, the first moment $m_i$ provides the prediction of the max-residual point $z_i^{\ast}$ for the $i\mhyphen$th iteration. The second moment $V_i$ characterizes the second derivative of the residual $\left\vert f(z) -  f_{\theta_i}(z)\right\vert$ at $z = z_i^{\ast}$.   
As shown in Tab. \ref{tab:1D_appendix}, the predictions
of $m_i$ and $f''_{\theta_{i-1}} \pm V_i^{-1}$ from the present method show good agreement with the max-residual point $z_i^{\ast}$ and the second derivative $f''(z^{\ast}_i)$  for each iteration.

\begin{table}[ht]
    \label{tab:1D_appendix}
    \centering
        \caption{
The max-residual point $z^*_i$ and its second derivative $f''(z^*_i)$, as well as their numerical estimations from the present sampling method at each iteration. Specifically, the max-residual point $z^*_i$ can be estimated by $m_i$ and the second derivative $f''(z^*_i)$ can be estimated by $f''_{\theta_{i}} \pm {V}^{-1}$. 
}
    \resizebox{\columnwidth}{!}{\begin{tabular}{|c|c|c|c|c|c|c|c|}
    \hline
     $i$ & 0 & 1 & 2 & 3 & 4 & 5 & 6  \\ 
      \hline
        $z^*_i$ &  0.000&2.000&-2.000&0.996&-0.996&2.805 &-2.806\\
        \hline
        $m_i$& -0.000&2.000&-2.000&0.996&-0.995&2.805&-2.806\\
        \hline
        $f''(z^*_i)$ & 41.478&41.478&41.478&41.468&41.462&15.273&15.539\\
        \hline
        $f''_{\theta_{i}} \pm {V}^{-1}$  & 40.974&37.442&36.916&38.283&39.519&11.841&13.380\\
        \hline
        $i$ & 6 &   7 & 8& 9 & 10 & 11 & 12 \\
        \hline
        $z^*_i$ & -2.806 &1.500&-1.502&0.503&-0.501&2.416&-2.413\\
        \hline
        $m_i$& -2.806 & 1.499&-1.503&0.502&-0.501&2.414&-2.413\\
        \hline
        
        $f''(z^*_i)$ & 15.539 & -37.478&-37.473&-37.474&-37.478&-31.901&-31.688\\
        \hline
        $f''_{\theta_{i}} \pm {V}_i^{-1}$  &15.380 & -36.796&-36.271&-37.515&-37.285&-31.521&-31.189\\
    \hline
    \end{tabular}}
    \label{tab:second_moment_1D_function}
\end{table}

\subsection{The 2D FES for molecule ala2}
Tab. ~\ref{tab:2d_l2_error} shows the approximation error of the constructed FES surrogate at each iteration. The total computational cost is presented in Tab. \ref{tab:computational cost} in the main manuscript. 

\begin{table}[]
    \centering
    \caption{The $l_2$ and $l_\infty$ error of the FES surrogate by the present CAS method for each sampling iteration.}
    \begin{tabular}{|c|ccccccc|}
    \hline 
    iteration & 1 & 2 & 3 & 4 & 5 & 6 & 7 \\
    \hline 
    $l_{\infty}$ error &  35.21 &   30.62& 22.38&  16.41&  14.83&  9.86&  10.68\\
    \hline
    $l_2$ error &  9.82 &  7.52& 6.84&  3.77 & 2.36 &  3.37 & 1.88   \\
    \hline
    \end{tabular}

    \label{tab:2d_l2_error}
\end{table}

\subsection{The 3D FES for molecule s1pe}
\label{app:3D}
Fig. \ref{fig:chi1_ome_-1}-\ref{fig:chi1_psi_0} shows the additional 2D projections of the 3D FES (molecule s1pe) constructed by the Rid and presented CAS method. For each case, the reference is constructed as a 2D FES using the metadynamics. 
The computational cost and accuracy of CAS and RiD are shown in Tab. \ref{tab:computational cost sipe}. It shows that the CAS method achieves better accuracy with less computational cost.

\begin{table}[h]
    \centering
    \caption{The accuracy of the constructed 3D FES (the s1pe molecule) and computational time for RiD and the present CAS method. The \( l_2 \) and \( l_\infty \) error are calculated up to \( 40 \) KJ/mol. For each case, the FES is projected onto a 2D plane with the third variable fixed; the reference solution is constructed as a 2D FES by the metadynamics.  The simulation time of the CAS method is multiplied by $20$ since 20 walkers are used.}
    \begin{tabular}{|c|c|c|c|c|c|}
\hline
 \multirow{2}{*}{Method} &\multirow{2}{*}{Restraint} &\multicolumn{2}{c|}{Accuracy} &   \multicolumn{2}{c|}{Time} \\
 \cline{3-6}
 ~ & ~ & $l_2$ error & $l_{\infty}$ error & Sampling & Train  \\
\hline
 \multirow{2}{*}{RiD} & $\psi=1.5$ & $5.76$ & $25.72$ & \multirow{2}{*}{$423.33$} & \multirow{2}{*}{$8$ (GPU)} \\
 \cline{2-4}
    ~ &  $\omega=1.5$ & $12.04$ & $49.13$ & &  \\
\hline
 \multirow{2}{*}{CAS}  &  $\psi=1.5$ & $2.44$ & $11.21$  & \multirow{2}{*}{$4.81\times20$ } & \multirow{2}{*}{$0.84$ (GPU)} \\
 \cline{2-4}
  &  $\omega=1.5$ & $3.89$ & $28.80$ & & \\
\hline
    \end{tabular}
    \label{tab:computational cost sipe}
\end{table}

\subsection{The 9D FES for molecule (s1pe)$_3$}
\label{app:9D}
The addition 2D projections of the 9D FES for molecule (s1pe)$_3$ are presented in  Fig. \ref{fig:s1pe3_add}. Similar to the 3D case, the FES constructed by the present CAS method shows a better agreement with the reference constructed as a 2D FES using metadynamics.

\subsection{The 30D FES for molecule ala16}
\label{app:30D}
The addition 2D projections of the 30D FES for molecule ala16 are presented in Fig. \ref{fig:ala16_psi5_psi5_2D}. We note that the projection on the $\phi_5-\psi_5$ plane is significantly different from other projections such as the $\phi_1-\psi_1$ plane in the main context. 
While the ala16 molecule consists of 15 sequential alanine residues, the FES for individual $\phi-\psi$ projections shows different features. The numerical results of the present CAS method show good agreement with the reference.

{\color{black}
\section{Reference FES by Metadynamics}
In this section, we list the parameters \texttt{PACE},~\texttt{HEIGHT},~\texttt{SIGMA},~\texttt{BIASFACTOR}, \texttt{GRID\_BIN} and \texttt{TEMP} used for constructing the reference FES using the PLUMED \cite{tribello2014plumed} simulations of metadynamics. 
\subsection{Ala2: 2D reference FES}
We construct the 2D reference FES using a single long well-tempered metadynamics run on the CVs
$\phi$ and $\psi$.
Gaussian hills are deposited every \texttt{PACE=100} MD steps with \texttt{HEIGHT=1.0} kJ/mol and
\texttt{SIGMA=0.2,0.2} rad, using \texttt{BIASFACTOR=10} at \texttt{TEMP=300 K}.
The bias is accumulated on a grid over $[-\pi,\pi]\times[-\pi,\pi]$ with \texttt{GRID\_BIN=300,300}.
The simulation is run with timestep $\Delta t=0.002$ ps (2 fs) for $5\times 10^6$ steps (10 ns total).

\subsection{s1pe: 2D projections of the 3D FES} 
We validate the 3D FES for s1pe using 2D projections of the CV space $(\omega_1,\phi_1,\psi_1)$.
To construct consistent 2D reference FES, we run three independent well-tempered metadynamics simulations:
(i) biasing $(\omega_1,\phi_1)$ while restraining $\psi_1$,
(ii) biasing $(\omega_1,\psi_1)$ while restraining $\phi_1$,
and (iii) biasing $(\phi_1,\psi_1)$ while restraining $\omega_1$.
In each case, the remaining CV is restrained by a harmonic potential with \texttt{KAPPA=500}.
Gaussian hills are deposited every \texttt{PACE=500} MD steps with \texttt{HEIGHT=0.5} kJ/mol and
\texttt{SIGMA=0.2,0.2} rad, using \texttt{BIASFACTOR=10} at \texttt{TEMP=300 K}.
The bias is accumulated on a grid over $[-\pi,\pi]\times[-\pi,\pi]$ with \texttt{GRID\_BIN=300,300}.
Each run is performed with timestep $\Delta t=0.002$ ps (2 fs) for $5\times 10^7$ steps (100 ns total).

\subsection{(s1pe)$_3$: 2D projections of the 9D FES}
Similar to the 3D case, we construct reference 2D FES for dihedral-pair projections by running nine independent
well-tempered metadynamics simulations.
In each simulation, we bias one dihedral pair among $(\omega_k,\phi_k)$, $(\omega_k,\psi_k)$, and $(\phi_k,\psi_k)$ for $k=1,2,3$,
and apply harmonic restraints (\texttt{KAPPA=500}) to all remaining dihedrals at fixed target values (as specified in the corresponding PLUMED inputs).
Gaussian hills are deposited every \texttt{PACE=500} MD steps with \texttt{HEIGHT=1.0} kJ/mol and
\texttt{SIGMA=0.2,0.2} rad, using \texttt{BIASFACTOR=10} at \texttt{TEMP=300 K}.
The bias is accumulated on a grid over $[-\pi,\pi]\times[-\pi,\pi]$ with \texttt{GRID\_BIN=300,300}.
Each run is performed with timestep $\Delta t=0.002$ ps (2 fs) for $2\times 10^7$ steps (40 ns total).

\subsection{ala16: 2D projections of the 30D FES}
Similar to the previous case, we construct the reference 2D FES for $(\phi_1,\psi_1)$, $(\phi_2,\phi_3)$, $(\psi_2,\psi_3)$ and $(\phi_5,\psi_5)$  by running four independent
well-tempered metadynamics simulations.
In each simulation, we bias one dihedral pair,
and apply harmonic restraints (\texttt{KAPPA=500}) to all remaining dihedrals at fixed target values (as specified in the corresponding PLUMED inputs).
Gaussian hills are deposited every \texttt{PACE=500} MD steps with \texttt{HEIGHT=1.0} kJ/mol and
\texttt{SIGMA=0.2,0.2} rad, using \texttt{BIASFACTOR=10} at \texttt{TEMP=300 K}.
The bias is accumulated on a grid over $[-\pi,\pi]\times[-\pi,\pi]$ with \texttt{GRID\_BIN=300,300}.
Each run is performed with timestep $\Delta t=0.002$ ps (2 fs) for $5\times 10^7$ steps (100 ns total).
}

{\color{black}{
\section{Code availablity}
The Python implementation of the proposed method is available at 
\url{https://github.com/Lyuliyao/consensus-sampling-method-for-expolering-high-dimensional-energy-surface}. 
All reproducibility materials, including code and data used to generate the numerical results in this paper, 
are archived at \url{https://doi.org/10.5281/zenodo.19039932}.
}
}

\begin{figure}
    \centering
    \includegraphics[width=.5\textwidth]{./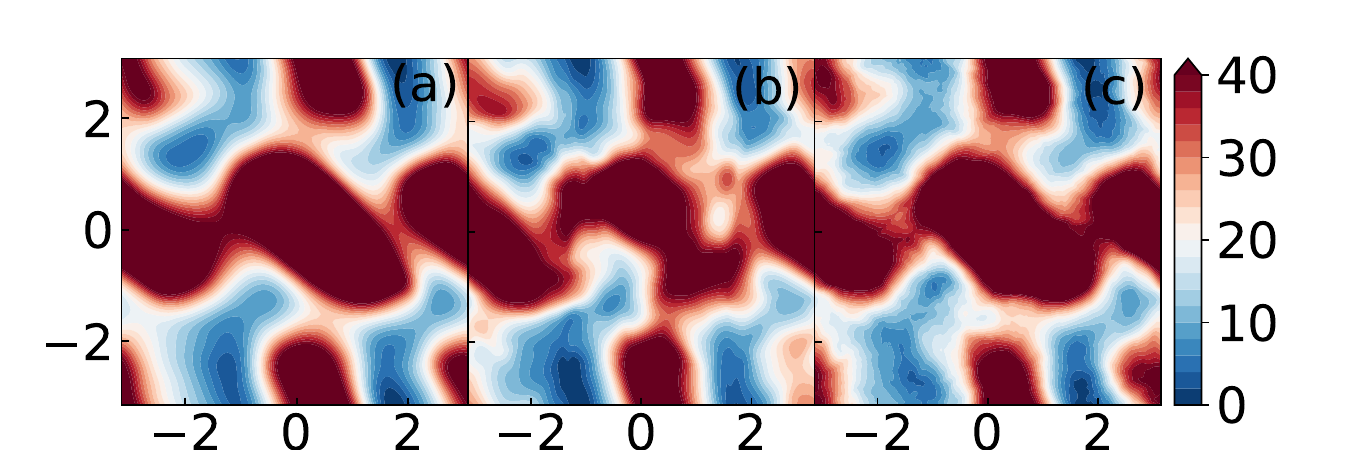}
    \includegraphics[width=.5\textwidth]{./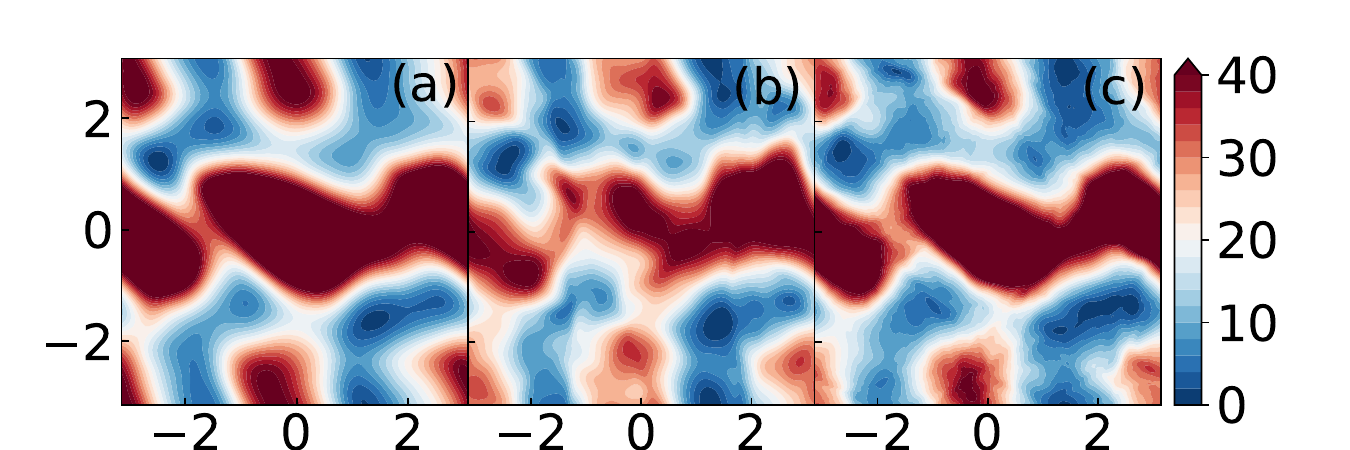}
   \caption{The 3D FES for the s1pe molecule projected on the $\phi-\psi$ plane with the third variable $\omega=-1$ 
 (first row) and $\omega=-2$ (second row). (a) 2D FES by metadynamics (reference) (b) RiD (c) CAS.}
    \label{fig:chi1_ome_-1}
\end{figure}

\begin{figure}
    \centering
    \includegraphics[width=.5\textwidth]{./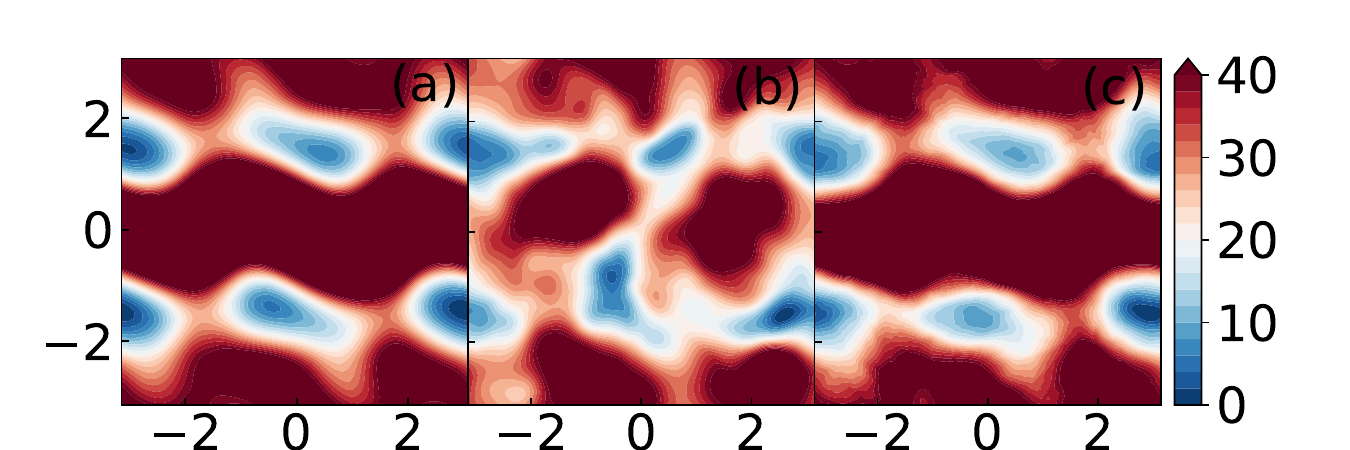}
    \includegraphics[width=.5\textwidth]{./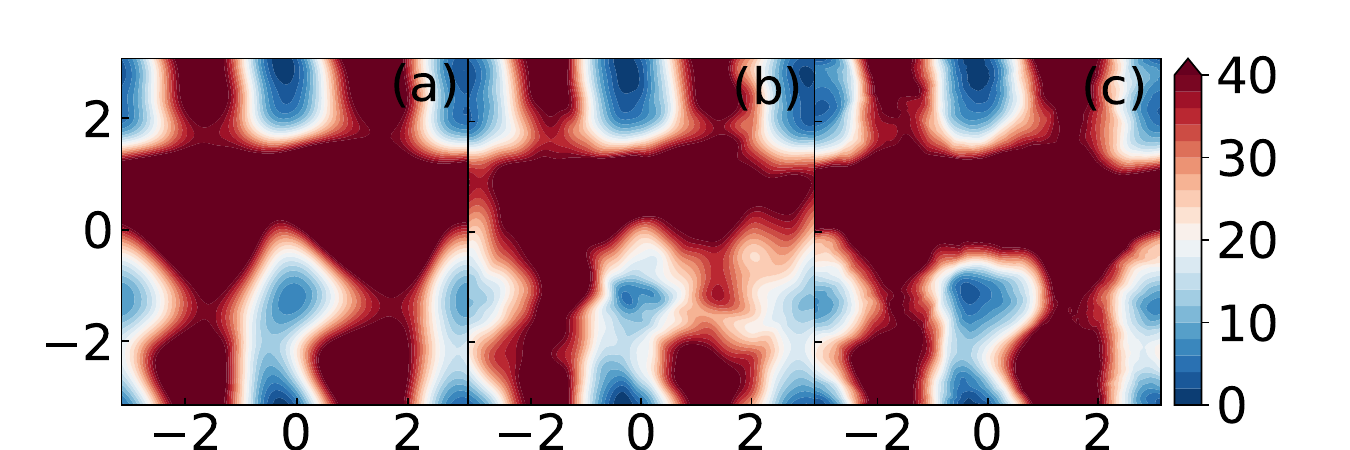}
    \caption{The 3D FES of the s1pe molecule on the $\omega-\psi$ plane with the third variable $\phi=0$ (first row) and $\phi=-1$ (second row). (a) 2D FES by metadynamics (reference)  (b) RiD (c) CAS.}
    \label{fig:chi1_phi_0}
\end{figure}

\begin{figure}
    \centering
    \includegraphics[width=.5\textwidth]{./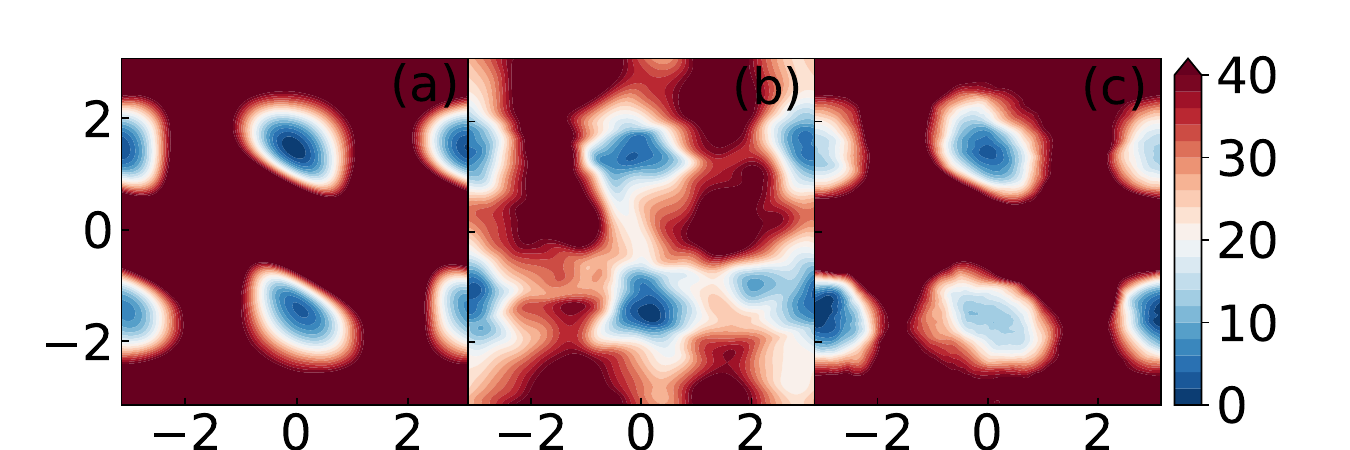}
    \includegraphics[width=.5\textwidth]{./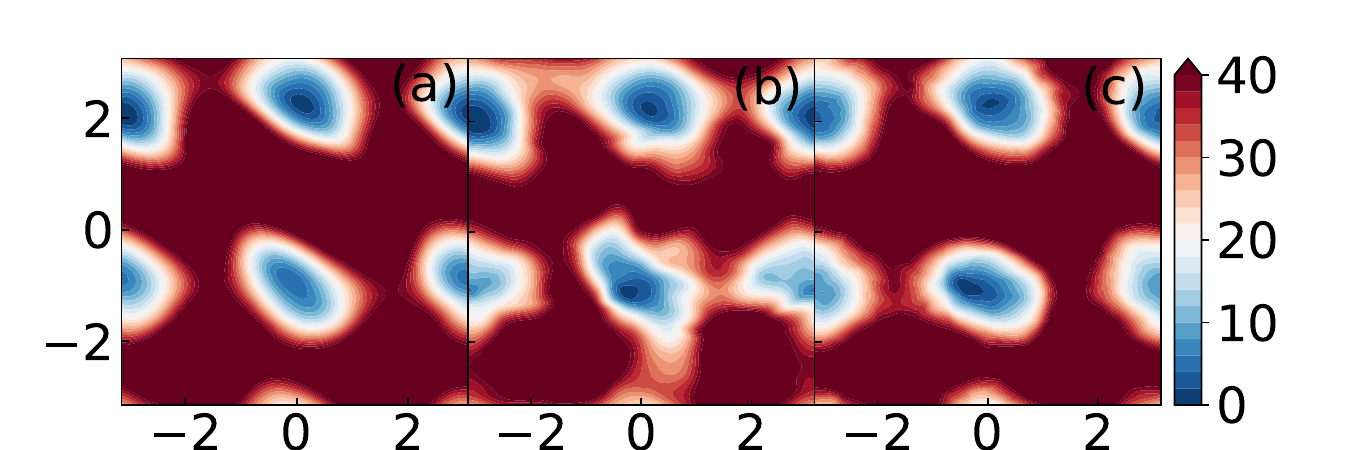}
    \caption{The 3D FES of the s1pe molecule on the $\omega-\phi$ plane with the third variable $\psi=0$ (first row) and $\psi=-1$ (second row). (a) 2D FES by metadynamics (reference)  (b) RiD (c) CAS.}
    \label{fig:chi1_psi_0}
\end{figure}

\begin{figure}
    \centering
\includegraphics[width=.5\textwidth]{./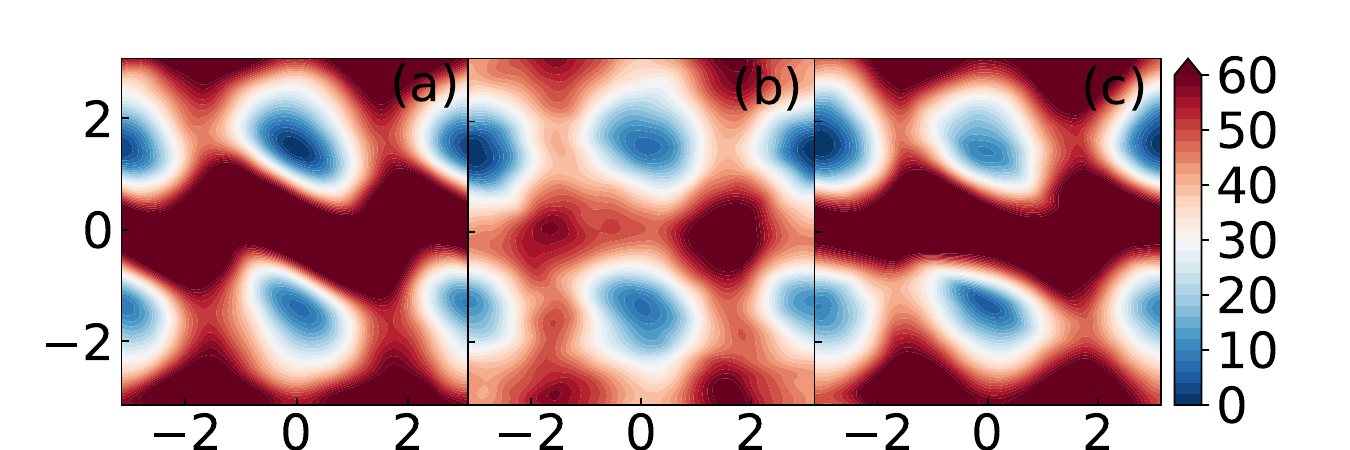}
\includegraphics[width=.5\textwidth]{./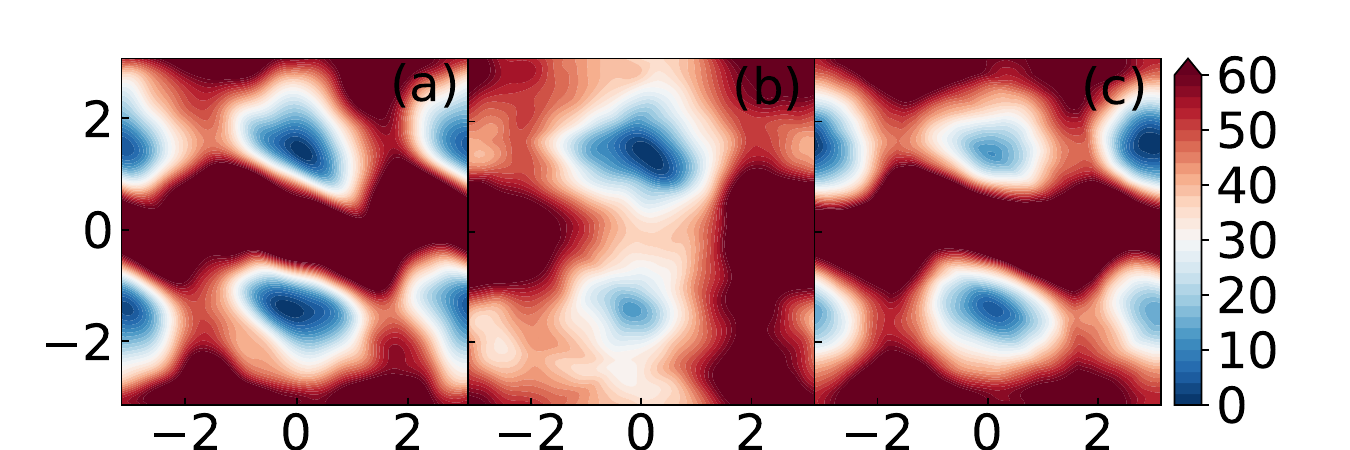}
\includegraphics[width=.5\textwidth]{./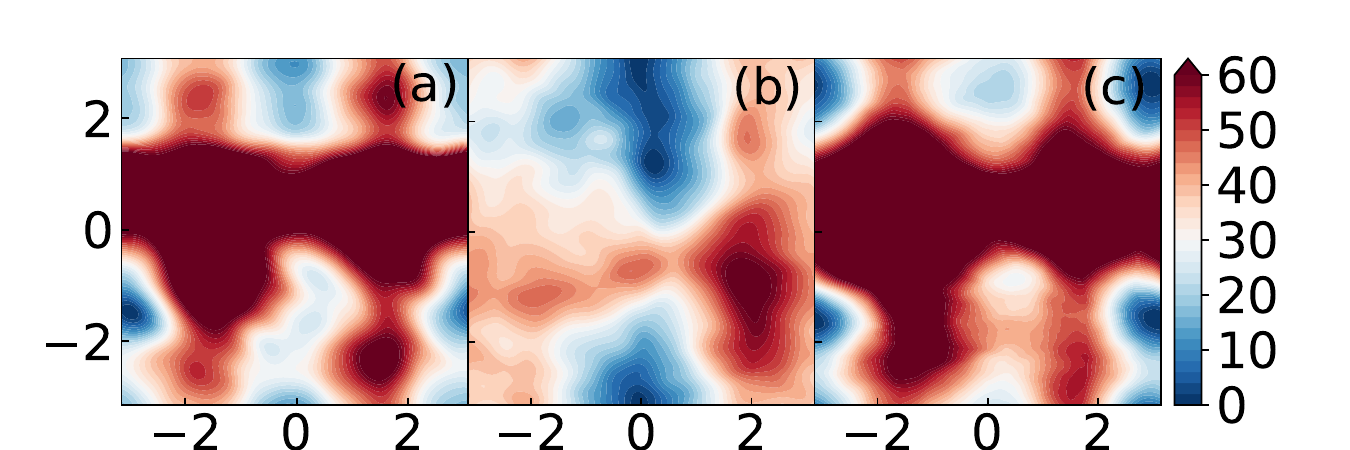}
\includegraphics[width=.5\textwidth]{./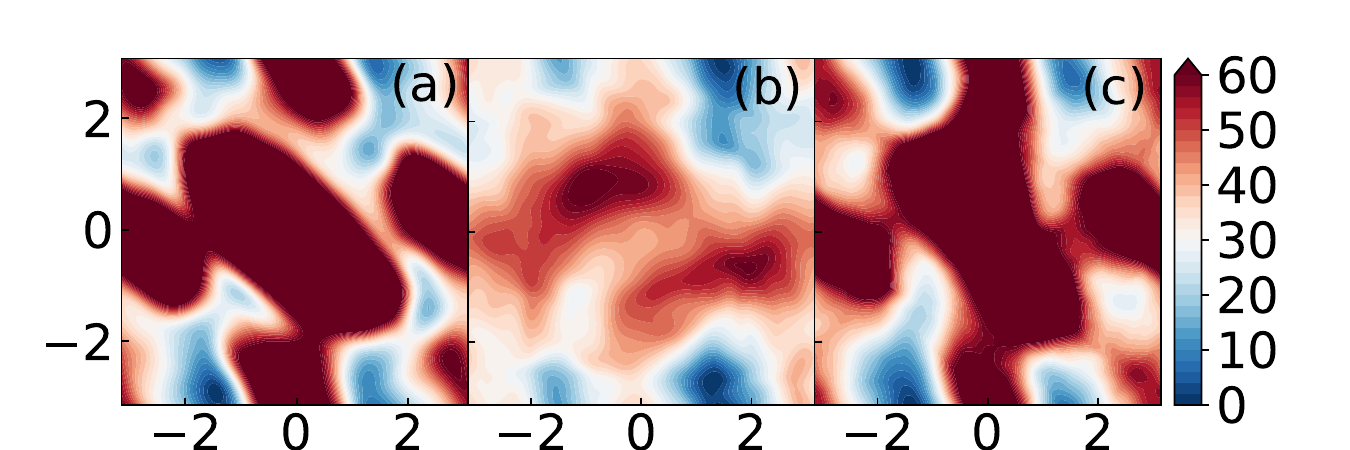}
\includegraphics[width=.5\textwidth]{./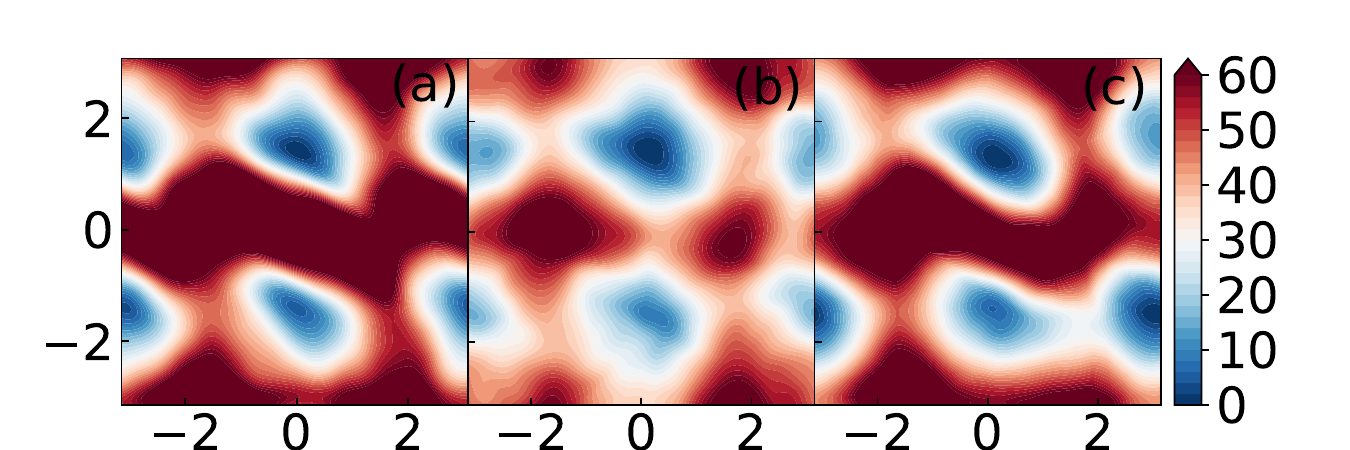}
\includegraphics[width=.5\textwidth]{./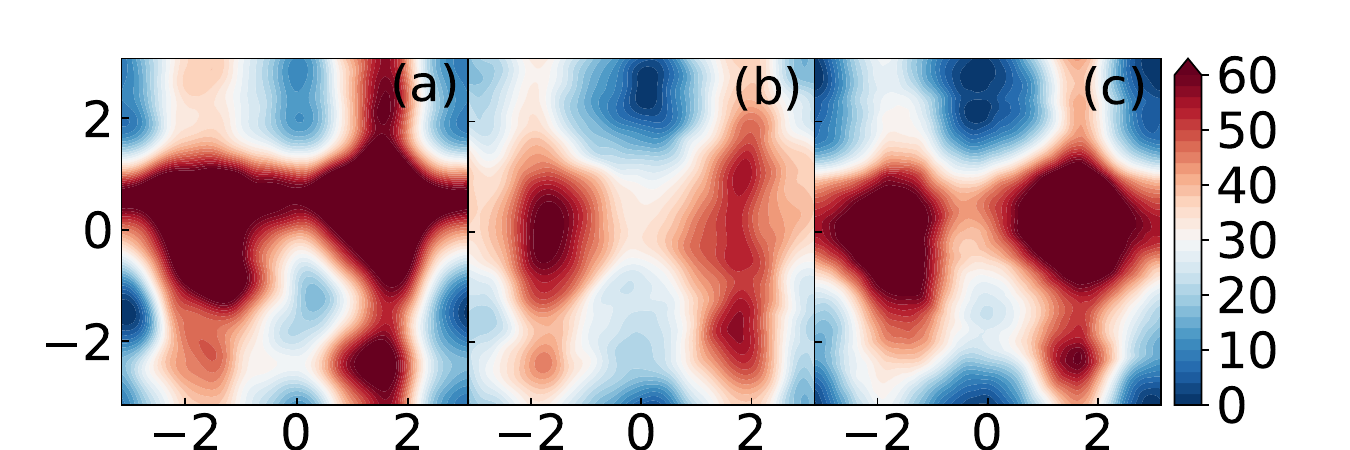}
\includegraphics[width=.5\textwidth]{./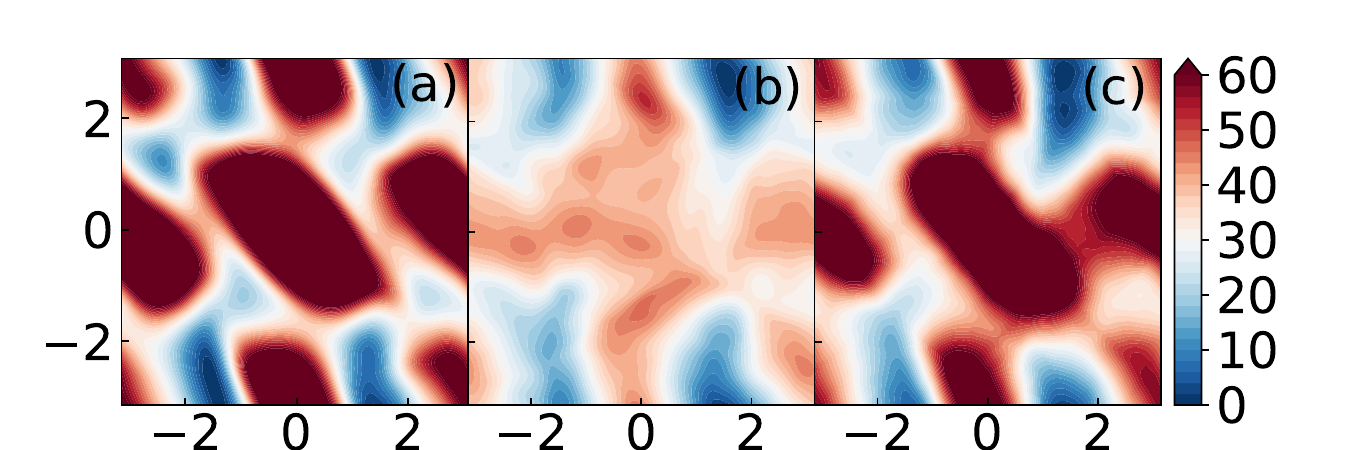}
    \caption{The 9D FES for the (s1pe)$_3$ molecule projected on the $\omega_1-\phi_1,\omega_2-\phi_2,\omega_2-\psi_2,\phi_2-\psi_2,\omega_3-\phi_3,\omega_3-\psi_3,\phi_3-\psi_3$ plane from top to bottom. (a) 2D FES by metadynamics (reference) (b) RiD (c) CAS.}
    \label{fig:s1pe3_add}
\end{figure}

\begin{figure}
    \centering
    \includegraphics[width=.5\textwidth]{./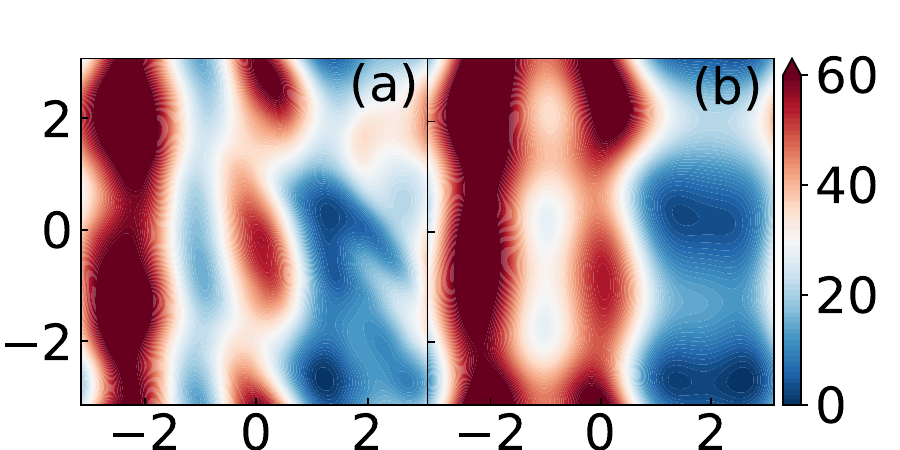}
    \caption{The 30D FES of the ala16 projected on the $\phi_5-\psi_5$ plane. (a) 2D FES constructed by metadynamics (reference) (b) 30D FES constructed by the present CAS method.}
    \label{fig:ala16_psi5_psi5_2D}
\end{figure}

\section*{Acknowledgments}
The work is supported in part by the National Science Foundation under Grant DMS-2110981 and DMS-2143739, and the ACCESS program through allocation MTH210005.

\clearpage

\begin{thebibliography}{10}

\bibitem{abrams2008efficient}
{\sc J.~B. Abrams and M.~E. Tuckerman}, {\em Efficient and direct generation of
  multidimensional free energy surfaces via adiabatic dynamics without
  coordinate transformations}, The Journal of Physical Chemistry B, 112 (2008),
  pp.~15742--15757.

\bibitem{allen2017computer}
{\sc M.~P. Allen and D.~J. Tildesley}, {\em Computer simulation of liquids},
  Oxford University Press, 2017.

\bibitem{bao2020numerical}
{\sc G.~Bao, X.~Ye, Y.~Zang, and H.~Zhou}, {\em Numerical solution of inverse
  problems by weak adversarial networks}, Inverse Problems, 36 (2020),
  p.~115003.

\bibitem{bonati2019neural}
{\sc L.~Bonati, Y.-Y. Zhang, and M.~Parrinello}, {\em Neural networks-based
  variationally enhanced sampling}, Proceedings of the National Academy of
  Sciences, 116 (2019), pp.~17641--17647.

\bibitem{bussi2007canonical}
{\sc G.~Bussi, D.~Donadio, and M.~Parrinello}, {\em Canonical sampling through
  velocity rescaling}, The Journal of Chemical Physics, 126 (2007).

\bibitem{carrillo2018analytical}
{\sc J.~A. Carrillo, Y.-P. Choi, C.~Totzeck, and O.~Tse}, {\em An analytical
  framework for consensus-based global optimization method}, Mathematical
  Models and Methods in Applied Sciences, 28 (2018), pp.~1037--1066.

\bibitem{carrillo2022consensus}
{\sc J.~A. Carrillo, F.~Hoffmann, A.~M. Stuart, and U.~Vaes}, {\em
  Consensus-based sampling}, Studies in Applied Mathematics, 148 (2022),
  pp.~1069--1140.

\bibitem{jingrun2022Consensus}
{\sc J.~Chen, S.~Jin, and L.~Lyu}, {\em A consensus-based global optimization
  method with adaptive momentum estimation}, Communications in Computational
  Physics, 31 (2022), pp.~1296--1316.

\bibitem{chipot2011enhanced}
{\sc C.~Chipot and T.~Leli{\`e}vre}, {\em Enhanced sampling of multidimensional
  free-energy landscapes using adaptive biasing forces}, SIAM Journal on
  Applied Mathematics, 71 (2011), pp.~1673--1695.

\bibitem{Darve_Pohorille_JCP_2001}
{\sc E.~Darve and A.~Pohorille}, {\em Calculating free energies using average
  force}, The Journal of Chemical Physics, 115 (2001), pp.~9169--9183.

\bibitem{darve2008adaptive}
{\sc E.~Darve, D.~Rodr{\'\i}guez-G{\'o}mez, and A.~Pohorille}, {\em Adaptive
  biasing force method for scalar and vector free energy calculations}, The
  Journal of Chemical Physics, 128 (2008).

\bibitem{dinh2016density}
{\sc L.~Dinh, J.~Sohl-Dickstein, and S.~Bengio}, {\em {Density estimation using
  real nvp}}, arXiv preprint arXiv:1605.08803,  (2016).

\bibitem{e2010transition}
{\sc W.~E and E.~Vanden-Eijnden}, {\em Transition-path theory and path-finding
  algorithms for the study of rare events}, Annual Review of Physical
  Chemistry, 61 (2010), pp.~391--420.

\bibitem{Gao_Wang_JCP_2023}
{\sc W.~Gao and C.~Wang}, {\em Active learning based sampling for
  high-dimensional nonlinear partial differential equations}, Journal of
  Computational Physics, 475 (2023), p.~111848.

\bibitem{gao2022state}
{\sc X.~Gao, Z.~Q. Xu, and X.~Y. Zhou}, {\em State-dependent temperature
  control for langevin diffusions}, SIAM Journal on Control and Optimization,
  60 (2022), pp.~1250--1268.

\bibitem{Gao_Yan_SIAM_2023}
{\sc Z.~Gao, L.~Yan, and T.~Zhou}, {\em Failure-informed adaptive sampling for
  pinns}, SIAM Journal on Scientific Computing, 45 (2023), pp.~A1971--A1994.

\bibitem{Ge_Lei_GLE_PRL_2024}
{\sc P.~Ge, Z.~Zhang, and H.~Lei}, {\em Data-driven learning of the generalized
  langevin equation with state-dependent memory}, Phys. Rev. Lett., 133 (2024),
  p.~077301.

\bibitem{grafke2024metadynamics}
{\sc T.~Grafke and A.~Laio}, {\em Metadynamics for transition paths in
  irreversible dynamics}, Multiscale Modeling \& Simulation, 22 (2024),
  pp.~125--141.

\bibitem{gulrajani2017improved}
{\sc I.~Gulrajani, F.~Ahmed, M.~Arjovsky, V.~Dumoulin, and A.~C. Courville},
  {\em Improved training of wasserstein gans}, Advances in neural information
  processing systems, 30 (2017).

\bibitem{Han_Zhou_Stringnet_arxiv_2024}
{\sc J.~Han, S.~Gu, and X.~Zhou}, {\em {StringNET}: Neural network based
  variational method for transition pathways}, arXiv preprint arXiv:2408.12621,
   (2024).

\bibitem{hess1997lincs}
{\sc B.~Hess, H.~Bekker, H.~J. Berendsen, and J.~G. Fraaije}, {\em Lincs: A
  linear constraint solver for molecular simulations}, Journal of Computational
  Chemistry, 18 (1997), pp.~1463--1472.

\bibitem{hornak2006comparison}
{\sc V.~Hornak, R.~Abel, A.~Okur, B.~Strockbine, A.~Roitberg, and
  C.~Simmerling}, {\em Comparison of multiple amber force fields and
  development of improved protein backbone parameters}, Proteins: Structure,
  Function, and Bioinformatics, 65 (2006), pp.~712--725.

\bibitem{Jiao_Li_arxiv_2023}
{\sc Y.~Jiao, D.~Li, X.~Lu, J.~Z. Yang, and C.~Yuan}, {\em Gas: A gaussian
  mixture distribution-based adaptive sampling method for pinns}, arXiv
  preprint arXiv:2303.15849,  (2023).

\bibitem{jorgensen1983comparison}
{\sc W.~L. Jorgensen, J.~Chandrasekhar, J.~D. Madura, R.~W. Impey, and M.~L.
  Klein}, {\em Comparison of simple potential functions for simulating liquid
  water}, The Journal of Chemical Physics, 79 (1983), pp.~926--935.

\bibitem{Kingma_Ba_Adam_2015}
{\sc D.~Kingma and J.~Ba}, {\em Adam: A method for stochastic optimization},
  International Conference on Learning Representations (ICLR),  (2015).

\bibitem{Kumar_Kollman_JCC_1992}
{\sc S.~Kumar, J.~M. Rosenberg, D.~Bouzida, R.~H. Swendsen, and P.~A. Kollman},
  {\em The weighted histogram analysis method for free-energy calculations on
  biomolecules. i. the method}, Journal of Computational Chemistry, 13 (1992),
  pp.~1011--1021.

\bibitem{laio2002escaping}
{\sc A.~Laio and M.~Parrinello}, {\em Escaping free-energy minima}, Proceedings
  of the National Academy of Sciences, 99 (2002), pp.~12562--12566,
  \url{https://doi.org/10.1073/pnas.202427399}.

\bibitem{lelievre2008long}
{\sc T.~Leli{\`e}vre, M.~Rousset, and G.~Stoltz}, {\em Long-time convergence of
  an adaptive biasing force method}, Nonlinearity, 21 (2008), p.~1155.

\bibitem{lelievre2016partial}
{\sc T.~Leli{\`e}vre, M.~Rousset, and G.~Stoltz}, {\em Partial Differential
  Equations and Stochastic Methods in Molecular Dynamics}, vol.~1970 of Lecture
  Notes in Mathematics, Springer, 2016.

\bibitem{lindahl_2019_2636382}
{\sc Lindahl, Abraham, Hess, and van~der Spoel}, {\em Gromacs 2019.2 source
  code}, Apr. 2019.

\bibitem{Lyu_Lei_PRL_2023}
{\sc L.~Lyu and H.~Lei}, {\em Construction of coarse-grained molecular dynamics
  with many-body non-markovian memory}, Phys. Rev. Lett., 131 (2023),
  p.~177301.

\bibitem{maragliano2006temperature}
{\sc L.~Maragliano and E.~Vanden-Eijnden}, {\em A temperature accelerated
  method for sampling free energy and determining reaction pathways in rare
  events simulations}, Chemical physics letters, 426 (2006), pp.~168--175.

\bibitem{miyamoto1992settle}
{\sc S.~Miyamoto and P.~A. Kollman}, {\em Settle: An analytical version of the
  shake and rattle algorithm for rigid water models}, Journal of Computational
  Chemistry, 13 (1992), pp.~952--962.

\bibitem{miyato2018spectral}
{\sc T.~Miyato, T.~Kataoka, M.~Koyama, and Y.~Yoshida}, {\em Spectral
  normalization for generative adversarial networks}, arXiv preprint
  arXiv:1802.05957,  (2018).

\bibitem{noid2013perspective}
{\sc W.~G. Noid}, {\em Perspective: Coarse-grained models for biomolecular
  systems}, The Journal of Chemical Physics, 139 (2013), p.~090901.

\bibitem{parrinello1981polymorphic}
{\sc M.~Parrinello and A.~Rahman}, {\em Polymorphic transitions in single
  crystals: A new molecular dynamics method}, Journal of Applied Physics, 52
  (1981), pp.~7182--7190.

\bibitem{robert2004monte}
{\sc C.~P. Robert and G.~Casella}, {\em Monte Carlo Statistical Methods},
  Springer, 2nd~ed., 2004.

\bibitem{Gareth_Richard_Bernoulli_1996}
{\sc G.~O. Roberts and R.~L. Tweedie}, {\em {Exponential convergence of
  Langevin distributions and their discrete approximations}}, Bernoulli, 2
  (1996), pp.~341 -- 363.

\bibitem{rosso2002use}
{\sc L.~Rosso, P.~Min{\'a}ry, Z.~Zhu, and M.~E. Tuckerman}, {\em On the use of
  the adiabatic molecular dynamics technique in the calculation of free energy
  profiles}, The Journal of Chemical Physics, 116 (2002), pp.~4389--4402.

\bibitem{shaffer2016enhanced}
{\sc P.~Shaffer, O.~Valsson, and M.~Parrinello}, {\em Enhanced, targeted
  sampling of high-dimensional free-energy landscapes using variationally
  enhanced sampling, with an application to chignolin}, Proceedings of the
  National Academy of Sciences, 113 (2016), pp.~1150--1155.

\bibitem{stoltz2010free}
{\sc G.~Stoltz, M.~Rousset, et~al.}, {\em Free energy computations: A
  mathematical perspective}, World Scientific, 2010.

\bibitem{tang2022adaptive}
{\sc K.~Tang, X.~Wan, and Q.~Liao}, {\em Adaptive deep density approximation
  for fokker-planck equations}, Journal of Computational Physics, 457 (2022),
  p.~111080.

\bibitem{tang2023adversarial}
{\sc K.~Tang, J.~Zhai, X.~Wan, and C.~Yang}, {\em Adversarial adaptive
  sampling: Unify pinn and optimal transport for the approximation of pdes},
  arXiv preprint arXiv:2305.18702,  (2023).

\bibitem{torrie1977nonphysical}
{\sc G.~M. Torrie and J.~P. Valleau}, {\em Nonphysical sampling distributions
  in monte carlo free-energy estimation: Umbrella sampling}, Journal of
  Computational Physics, 23 (1977), pp.~187--199.

\bibitem{tribello2014plumed}
{\sc G.~A. Tribello, M.~Bonomi, D.~Branduardi, C.~Camilloni, and G.~Bussi},
  {\em Plumed 2: New feathers for an old bird}, Computer physics
  communications, 185 (2014), pp.~604--613.

\bibitem{valsson2014variational}
{\sc O.~Valsson and M.~Parrinello}, {\em Variational approach to enhanced
  sampling and free energy calculations}, Phys. Rev. Lett., 113 (2014),
  p.~090601.

\bibitem{van2023hyperactive}
{\sc C.~van~der Oord, M.~Sachs, D.~P. Kov{\'a}cs, C.~Ortner, and
  G.~Cs{\'a}nyi}, {\em Hyperactive learning for data-driven interatomic
  potentials}, npj Computational Materials, 9 (2023), p.~168.

\bibitem{wang2020reinforcement}
{\sc H.~Wang, T.~Zariphopoulou, and X.~Y. Zhou}, {\em Reinforcement learning in
  continuous time and space: A stochastic control approach}, The Journal of
  Machine Learning Research, 21 (2020), pp.~8145--8178.

\bibitem{E_Ren_PRB_2002}
{\sc E.~Weinan, W.~Ren, and E.~Vanden-Eijnden}, {\em String method for the
  study of rare events}, Physical Review B, 66 (2002), p.~052301.

\bibitem{weiser2019cgenff}
{\sc L.~J. Weiser and E.~E. Santiso}, {\em A cgenff-based force field for
  simulations of peptoids with both cis and trans peptide bonds}, Journal of
  computational chemistry, 40 (2019), pp.~1946--1956.

\bibitem{yu2022gradient}
{\sc J.~Yu, L.~Lu, X.~Meng, and G.~E. Karniadakis}, {\em Gradient-enhanced
  physics-informed neural networks for forward and inverse pde problems},
  Computer Methods in Applied Mechanics and Engineering, 393 (2022), p.~114823.

\bibitem{zang2020weak}
{\sc Y.~Zang, G.~Bao, X.~Ye, and H.~Zhou}, {\em Weak adversarial networks for
  high-dimensional partial differential equations}, Journal of Computational
  Physics, 411 (2020), p.~109409.

\bibitem{ZengCPINN_arxiv_2022}
{\sc Q.~Zeng, S.~H. Bryngelson, and F.~Sch{\"a}fer}, {\em Competitive physics
  informed networks}, arXiv preprint arXiv:2204.11144,  (2022).

\bibitem{zhang2018reinforced}
{\sc L.~Zhang, H.~Wang, and W.~E}, {\em Reinforced dynamics for enhanced
  sampling in large atomic and molecular systems}, The Journal of Chemical
  Physics, 148 (2018), p.~124113.

\end{thebibliography}

\end{document}